	\providecommand\FINAL[1]{} 

\newcommand\SOURCE[1]{\FINAL{\marginpar{source: #1}}}

\newcommand\SHORTER[1]{}

\newcommand\parametredeapxproof{}
	\renewcommand\parametredeapxproof{appendix=inline}

\providecommand\FINAL[1]{#1}%
\providecommand\PASFINAL[1]{#1}
\FINAL{\renewcommand\PASFINAL[1]{} }

\newcommand{\cp}[1]{\operatorname{#1}}
\newcommand{\Ptime}{\cp{PTIME}} 
\newcommand\PTIME{\Ptime}

\newcommand{\Pspace}{\cp{PSPACE}}
\newcommand\PSPACE{\Pspace}

\newcommand{\FPspace}{\cp{FPSPACE}}

\newcommand{\PLS}{\cp{FLS}}
\newcommand{\PPAD}{\cp{PPAD}}

\newcommand\RCA{\cp{RCA}}
\newcommand\WKL{\cp{WKL}}
\newcommand\ACA{\cp{ACA}}

\documentclass{article}
\AtBeginDocument{%
  }

    \usepackage{todonotes}

\usepackage[\parametredeapxproof]{apxproof}

\usepackage{amsmath}
\usepackage{amsfonts}
\usepackage{amssymb}
\usepackage[capitalise]{cleveref}
\usepackage{xfrac}
\usepackage{authblk}
\usepackage{geometry}

\makeatletter
\newcommand\footnoteref[1]{\protected@xdef\@thefnmark{\ref{#1}}\@footnotemark}
\makeatother

\newcommand{\norm}[1]{\left\lVert#1\right\rVert}

\usepackage{soul,xcolor}

\newtheoremrep{theorem}{Theorem}
\newtheoremrep{lemma}{Lemma}
\newtheoremrep{proposition}{Proposition}
\newtheoremrep{corollary}{Corollary}
\theoremstyle{definition}
\newtheoremrep{definition}{Definition}
\newtheoremrep{example}{Example}
\theoremstyle{remark}
\newtheoremrep{remark}{Remark}

\begin{document}

\setstcolor{red}
\title{Relating the Computational and Logical Difficulty of Solving ODEs: From Polynomial to Discontinuous Right-Hand Sides}

\author[ ]{Olivier Bournez}

\author[ ]{Alonso N\'u\~nez}

\affil[ ]{bournez@lix.polytechnique.fr}
\affil[ ]{herreranunez@lix.polytechnique.fr}
\affil[ ]{ }
\affil[ ]{LIX, École polytechnique\\ Palaiseau,France}

\maketitle

\begin{abstract}
When a computer algebra system fails to solve an Ordinary Differential
Equation, is this a limitation of its implementation, or a genuine
computational barrier?

Three traditions bear on the question of how hard it is to solve
ODEs. Modern algorithms in computer algebra can be extremely efficient: for instance, 
Newton-type methods solve polynomial ODEs
over $\mathbb{Q}[[X]]$ in quasi-linear time. Work on analog models of
computation has shown that polynomial ODEs and Turing machines are
two presentations of the same phenomenon, with solution length acting
as time and precision as space. Computable analysis shows that ODEs
can be intrinsically hard, namely undecidable, even
$\mathsf{PSPACE}$-complete, over compact domains. Comparing these
traditions is natural and necessary, yet such comparisons routinely
reduce to comparisons of encodings rather than of underlying
algorithmic content.

We argue that reverse mathematics is a relevant tool in this
setting: it provides a representation-invariant lens in which
algorithmic content is compared directly. 
We prove that every level
of the Big Five hierarchy of reverse mathematics is inhabited by a natural statement from
classical ODE theory, as an exact equivalence rather than an analogy:
the regularity of the vector field $f$ is an intrinsic algorithmic
invariant placing the initial value problem $y'(t)=f(t,y(t))$,
$y(t_0)=y_0$, into one of several computational strata, ranging from
polynomial-time solvability to transfinite computation.

The resulting stratification acts as a practical diagnostic common to
the three traditions. By abstracting from representation, it separates
fundamental barriers from the technical shortcomings of symbolic
solvers, the representation artefacts of analog encodings, and the
effectivity constraints of computable analysis, identifying the
intrinsic parameters (length bounds, radii of convergence, moduli of
continuity) under which feasibility is restored.
\end{abstract}

\section{Introduction}

\SOURCE{Cl sur abstract}

Computer algebra systems such as Maple, Mathematica, and SageMath routinely solve ordinary differential equations with well-behaved right-hand sides. For an initial value problem
\begin{equation}\label{eq:ivp}
y'(t)=f(t,y(t)),\quad y(t_0)=y_0,\tag{IVP}
\end{equation}
they often produce explicit formulas, series expansions, or other symbolic representations when $f$ is smooth or Lipschitz; see \cite{bostan2017algorithmes} for a comprehensive treatment. In favorable cases the toolkit is strikingly efficient: Newton-type methods compute power-series coefficients in quasi-linear time \cite{brent1978fast,bostan2017algorithmes}, dramatically outperforming direct Picard iteration.

Independently, comparisons between analog and digital computation have uncovered a deep correspondence: polynomial ODEs and Turing machines are two presentations of the same computational phenomenon (see Section~\ref{sec:explainodeandMT}). Arbitrary Turing computations embed into finite-dimensional polynomial flows, and conversely such dynamics encode any computable process~\cite{JOC2007}. This extends to complexity: solution length acts as time~\cite{JournalACM2017,TheseAmaury}, while the precision needed to distinguish outcomes mirrors space~\cite{Icalp2024,TheseManon}. Symbolic ODE solving thus becomes a form of program analysis, and lower bounds from computability and complexity theory translate into intrinsic limitations on any solver (e.g.\ hardness of computing radii of convergence~\cite{dsg06a}).

Computable analysis sharpens the picture~\cite{Wei00,Ko91,brattka2008tutorial}:
over compact domains, ODEs with computable right-hand sides may admit
no computable solution~\cite{Ko83,PR79,Abe71}, and restricting to
polynomial-time computable dynamics does not restore tractability---
solving can remain $\mathsf{PSPACE}$-complete~\cite{kawamura2009lipschitz,Ko83}.

A recurring methodological issue runs across all these results: in each
community (computer algebra, computable analysis, complexity theory,
reverse mathematics, Weihrauch reducibility) representations and
complexity models are part of the claim, not neutral wrappers, and
cross-community comparisons easily devolve into comparisons of
encodings rather than of the underlying algorithmic tasks.
Richardson’s theorem~\cite{Richardson1968} is emblematic. This paradigmatic undecidability result in computer algebra is fundamentally representation-driven: although Hilbert’s tenth problem can be readily embedded into questions about dynamical systems~\cite{Ruo97b,CIEChapter2007}, such \emph{static undecidability} does not, in itself, entail any intrinsic difficulty in predicting the resulting dynamics~\cite{Ruo97b}. 

We argue in this work that \emph{reverse mathematics} provides an appropriate representation invariant framework to reconcile efficient algorithms, hardness results, and representation issues. Reverse mathematics classifies theorems by the axioms needed to prove them, and its standard subsystems come with a robust algorithmic interpretation: $\RCA_0$ captures basic effective constructions, $\WKL_0$ compactness-based search, $\ACA_0$ arithmetical comprehension, $\mathrm{ATR}_0$ transfinite recursion, and $\Pi^1_1\text{-}\mathrm{CA}_0$ strong analytical comprehension.

We establish that every level of the Big Five hierarchy is inhabited by a natural ODE statement from classical analysis, and the correspondence is exact: the differential statement and the logical principle are mutually derivable, hence carry the same computational content. The classification turns familiar theorems into explicit resource bounds while still exposing the quantitative hypotheses symbolic methods exploit in practice---compactness of the domain, moduli of uniqueness/continuity, radii of convergence. Each level is a phase transition: membership yields constructive upper bounds, separation yields lower bounds.

Concretely, we obtain a strict computational\footnote{We use \emph{algorithmic complexity} as an umbrella term for computability and complexity theory.} \emph{hierarchy} for~\eqref{eq:ivp}, where the regularity of $f$ serves as an \emph{algorithmic invariant} determining the stratum, ranging from an efficient polynomial-time stratum (polynomial dynamics with quantitative data) through Lipschitz, Osgood, and continuous regimes up to transfinite computation for discontinuous but solvable right-hand sides.

\newcommand\Encode{\mathrm{Encode}}
\newcommand\Decode{\mathrm{Decode}}
\newcommand\NextConfiguration{\mathrm{NextConf}}

\section{Relating computations and (polynomial) ODEs}
\label{sec:explainodeandMT}

We briefly recall how programs and polynomial ODEs can simulate each
other. This two-way translation, initiated in \cite{dsg05,JOC2007} in the
context of continuous-time analog models, provides a concrete basis for
viewing ODE solving as a computational process.

\paragraph{From programs to polynomial ODEs.}
Any program computing a (possibly partial) function $f$ iterates a
transition map until a halting condition holds, yielding the so-called Kleene normal form:
\begin{equation}\label{eq:kleene-nf}
  f(x)=\Decode\bigl(\mu t.\, M(\Encode(x),t)=0\bigr),
\end{equation}
for primitive-recursive $\Encode,\Decode,M$, where $\mu$ denotes
unbounded minimization (for a Turing machine: $\Encode(x)$ is the initial
configuration, $M(c,t)$ the configuration after $t$ steps from $c$,
$M(\Encode(x),t)=0$ expresses halting at time $t$, and $\Decode$ reads
the output). 

The discrete dynamics can then be expressed as a primitive-recursive
recurrence
\begin{equation}\label{eq:discrete-iter}
\begin{aligned}
  M(x,0)   &= \Encode(x),\\
  M(x,t+1) &= \NextConfiguration\!\bigl(M(x,t)\bigr).
\end{aligned}
\end{equation}
with a standard encoding identifying configurations with~$\mathbb{R}^d$
(e.g.\ $d=3$: one coordinate for the state, two reals for the left/right
tape contents via base-$k$ expansions). The iteration~\eqref{eq:discrete-iter}
can then be embedded into a continuous-time flow of a polynomial vector field with
rational coefficients,
\begin{equation}\label{eq:poly-ode}
  y'(t)=P\bigl(y(t)\bigr),
\end{equation}
whose solution tracks the discrete computation at integer times via a
clock variable and periodic update gadgets. The simulation can be made \emph{robust} to small perturbations in the initial condition and the dynamics~\cite{dsg05,JOC2007}.

\paragraph{From polynomial ODEs to programs.}
Conversely, given $P$ and an initial condition, classical numerical
schemes (Picard, Taylor, validated step-based procedures under a priori
bounds) produce approximations on any interval of existence and
uniqueness. 
\paragraph{More than universality.}
This correspondence between programs and polynomial ODEs should not be
read as yet another universality statement. Here the simulation is
\emph{direct}: the algorithmic state is represented by the ODE state,
and the flow mirrors step-by-step execution. Beyond Turing
completeness, this enables intrinsic geometric quantities to serve as
computational resources, accessed directly from the solution itself rather than recovered from an encoding. Polynomial-time computations correspond to polynomial ODEs whose relevant solution curves have polynomial length,
and vice versa
\cite{bournez2012complexity,JournalACM2017,TheseAmaury}; under
robustness, the precision separating outcomes mirrors space
\cite{Icalp2024,TheseManon}. This viewpoint underlies several recent
works, e.g. \cite{StacsBournezGozzi2024,Computability2024Gozzi},
which relate the complexity of solving discontinuous ODEs to
transfinite computation, as well as classical characterizations of
complexity classes via continuous dynamical systems
\cite{TheseAmaury,TheseRiccardo} and via discrete difference schemes
\cite{MFCSJournal,TheseManon,Antonelli0K24,AntonelliDK26,Antonelli0K25}.

\paragraph{Why do we do this?}
Our goal is not to promote a single model of computation, but to
isolate the structural principles that govern computational difficulty
across discrete, continuous, and higher-order settings, and accross various traditions.

Solving an ODE is, by its very nature, a \emph{search} problem: one
must find a function satisfying a given differential constraint.
Running a Turing machine is likewise a search, for a valid execution
trace consistent with the transition rules. The correspondence above
shows that these are two presentations of the same phenomenon and
that efficiency is preserved across the translation: finding an
efficient algorithm amounts to finding a, possibly alternative, way to
carry out the search. The central question is then \emph{what makes a
search efficient?}.

We state that reverse mathematics answers precisely this: it abstracts away from
representation and classifies search problems by the logical
principles they require, such as compactness, determinism,
arithmetical comprehension, and iterated limits, each with a clear
computational interpretation. 

As evidence, we show that every level of
the Big Five is naturally realised by structural theorems about ODEs.
Since these systems are pairwise distinct and reflect genuine
computational resources, this yields a sharp hierarchy of algorithmic
difficulty indexed by regularity assumptions. Crucially, the resulting
comparisons transcend representation issues, unlike all the other
complexity comparisons we have encountered.

\section{Reverse Mathematics}\label{sec:rm}

Mathematicians routinely say that one theorem is \emph{stronger} than
another, or that two are \emph{equivalent}. Taken literally, this is
misleading: any two true statements are equivalent in full logic.
Reverse mathematics fixes a \emph{weak} base theory $W$ and asks which
additional principles a theorem requires. We write $W\vdash\sigma$ when
the statement $\sigma$ is derivable from the axioms of $W$, and
$W\nvdash\sigma$ otherwise. Over $W$, theorems $\sigma,\tau$ are
\emph{equivalent} if $W\vdash\sigma\leftrightarrow\tau$, while $\sigma$
is \emph{stronger} than $\tau$ if $W\vdash\sigma\to\tau$ but
$W\nvdash\tau\to\sigma$. The base $W$ must be weak enough for the
comparison to be nontrivial.

The standard arena of classical reverse mathematics is
\emph{second-order arithmetic}, a two-sorted language $\mathcal{L}_2$
with numbers and sets of numbers, expressive enough to encode real
numbers, functions, and much of analysis. A model
$$M=(|M|,\mathcal{S}_M,0,1,{+},{\cdot},{<})$$ has a set sort
$\mathcal{S}_M\subseteq\mathcal{P}(|M|)$; we write $M\models\sigma$ when
$\sigma$ holds in $M$. When $|M|=\omega$ with the standard operations,
$M$ is called an \emph{$\omega$-model}, determined by $\mathcal{S}_M$;
the canonical example $\mathrm{REC}$ has $\mathcal{S}_{\mathrm{REC}}$
equal to the computable sets.

We write $A\leq_T B$ for \emph{Turing reducibility}: $A$ is computable
by a Turing machine allowed to consult~$B$ as an oracle, i.e.\ to query
membership of any integer in~$B$ during its computation. A set $C$ is
computable \emph{relative to}~$B$, or \emph{relativized to}~$B$, if
$C\leq_T B$. We write $0'$ for the \emph{halting set}---the set of
codes of Turing machines that halt on the empty input, classically
noncomputable---and $A'$ for the \emph{Turing jump} of~$A$, namely the
halting set relative to an $A$-oracle machine. Thus $0'$ measures the
intrinsic complexity of halting, and each jump $A\mapsto A'$ strictly
increases complexity.

These computability notions fit inside reverse mathematics via the
$\omega$-model correspondence: a subsystem of second-order arithmetic
is determined, over~$\omega$, by the sets~$\mathcal{S}_M$ it
admits---computable, computable-from-an-oracle, arithmetically
definable, and so on. Stronger comprehension or recursion principles
admit more sets, hence more theorems become provable. 

Reverse mathematics was initiated by Friedman
\cite{friedman1975some,friedman1976systems} and subsequently
systematised by Simpson \cite{simpson2009subsystems}, who established
that five subsystems suffice to classify most theorems of analysis and
combinatorics: $\RCA_0$, $\WKL_0$, $\ACA_0$, $\mathrm{ATR}_0$, and
$\Pi^1_1\text{-}\mathrm{CA}_0$ (the \emph{Big Five}). These form a
strictly increasing chain
\[
\RCA_0 \subsetneq \WKL_0 \subsetneq \ACA_0 \subsetneq \mathrm{ATR}_0
\subsetneq \Pi^1_1\text{-}\mathrm{CA}_0,
\]
where each inclusion is proper: every theorem provable at a lower
level is provable at any higher level, but strict separations are
witnessed by $\omega$-models in which the weaker theory holds and the
stronger one fails. Among them, the \emph{Weak König's Lemma} (WKL) states
that every infinite binary tree has an infinite path; $\WKL_0$ is
$\RCA_0$ augmented with WKL. For ODEs, \cite{Simpson84Peano} showed
that Cauchy-Peano $\equiv\WKL_0$ and that Cauchy-Lipschitz is
provable in $\RCA_0$. We prove that all five levels are inhabited by
natural ODE statements.

\subsection{An example of application to ODEs}\label{sec:rm-example}

Reverse mathematics gives a high-level lens on \emph{why} solving ODEs
can be hard, separating genuine difficulty from artefacts of
representation.

\paragraph{Impossibility results for free.}
Many existence theorems in analysis have the $\Pi^1_2$ form
$$\sigma\equiv\forall U\,\exists V\,\theta(U,V),$$ where $U$ is input
data (e.g.\ a continuous function), $V$ is the asserted object (e.g.\ a
solution), and $\theta$ is arithmetical. Peano's existence theorem fits
this schema: for every continuous~$f$ there exists a solution~$y$ to
the IVP~\eqref{eq:ivp}.

Suppose $\RCA_0\vdash\sigma\leftrightarrow\WKL$, so that any proof of
$\sigma$ requires a principle equivalent to WKL. Consider the
$\omega$-model $\mathcal{M}_{\mathrm{rec}}$ of recursive sets. It
satisfies $\RCA_0$ but not $\WKL$: there exist infinite recursive
binary trees with no recursive infinite path. Hence
$\mathcal{M}_{\mathrm{rec}}\models\neg\sigma$. Unwinding what this
means yields a \emph{computable counterexample}: some recursive input
$U$ admits no recursive $V$ with $\theta(U,V)$. Classically, solutions
exist for every input; but for this particular computable input, no
computable solution exists. The phenomenon is no accident: it is a
direct logical consequence of $\sigma$ having exactly the strength
of~$\WKL$.

\begin{example}
This is the conceptual core of the classical computable-analysis
results showing that some computable ODEs admit no computable solution
\cite{PR79,Abe71}, here abstracted away from the technical details of
the constructions.
\end{example}

\paragraph{Impossibility of hardness.}
Reverse mathematics also rules out \emph{overly strong} hardness. Call
$A$ \emph{low} if $A'\leq_T 0'$: low sets, though possibly
noncomputable, are computationally weak in that they cannot compute
the halting problem (see \cite{Soare2016} for a textbook account).

\begin{lemmarep}\label{lem:low}
If $A$ is low, then $0'\not\leq_T A$.
\end{lemmarep}
\begin{proof}
If $0'\leq_T A$, then $0''=(0')'\leq_T A'\leq_T 0'$, contradicting the
strictness of the jump.
\end{proof}

By the Low Basis Theorem \cite{jockusch1972degrees}, every infinite
recursive tree has a path of low Turing degree. Hence whenever
$\sigma\equiv\WKL_0$, every computable $U$ admits a low-degree solution
$V$, from which (by Lemma~\ref{lem:low}) the halting problem is
unreachable. So solutions of computable ODEs may be noncomputable, but
never enough to encode halting: the noncomputability lives entirely in
the weak choice principle of $\WKL_0$.

Both conclusions followed from the single fact that Peano's theorem is
$\WKL_0$, abstracting away from representation.

\subsection{Going to complexity?}\label{sec:brm}

Classical reverse mathematics is insensitive to resources: all
computable objects look the same. \emph{Bounded reverse mathematics} and
\emph{bounded arithmetic} restore sensitivity by restricting
definability and induction so that provably total functions coincide
with complexity classes \cite{krajicek1995bounded,cook2010logical,ReverseMathOfComplexityLowerBournds24}.
We do not use bounded RM machinery here, but borrow its lesson from
implicit complexity: \emph{complexity lives in the recurrence schema}.
The question is not whether an iteration exists, but whether it can be
reorganized so that both iteration count and intermediate values remain
polynomially bounded.

\begin{example}[Iteration: computability vs.\ complexity]
The Kleene normal form~\eqref{eq:kleene-nf} is insensitive to how large
$t$ is. Efficiency requires a polynomial bound:
\begin{equation}\label{eq:discrete-iter:bis}
f(x)=\Decode\bigl(M(\Encode(x),P(|x|))\bigr),
\end{equation}
for some polynomial $P$ and $|x|$ the binary length. This is a change
of variable: the iteration count is replaced by a quantity polynomial
in input size, exactly the kind of reshaping made explicit in implicit complexity.
\end{example}

\paragraph{Recursion on notation.}
In Cook-style polynomial-time arithmetic $\mathrm{PV}$, polynomial-time
functions are captured by \emph{recursion on notation}:
\begin{equation}\label{sec:impl}
f(0)=a,\quad f(2x)=g(x,f(x)),\quad f(2x+1)=h(x,f(x)),
\end{equation}
with $g,h$ polynomial-time. Recursion is controlled by the binary
length of the input, enforcing feasible iteration machine-independently.

\subsubsection{Applications to complexity of solving ODEs}
\label{sec:brm-eff-odes}

\paragraph{A guiding obstruction.}
The function $E(x)=2^x$ cannot be expressed using bounded recursion on
notation, the basic schema underlying polynomial-time computation \cite{Cob65,cook2010logical}.

\paragraph{Polynomial ODEs and iteration depth.}
Consider the polynomial IVP~\eqref{eq:poly-ode} with
$P\in\mathbb{Q}[X]$ (scalar case for simplicity). A classical method is
Picard iteration: starting from $u_0(t)\equiv y_0$,
\(
u_{n+1}(t)=y_0+\int_0^t P(u_n(s))\,ds.
\)
Each $u_n$ is a polynomial (or truncated power series), and the
definition is primitive recursion over $n$.

Efficient algorithms instead use Newton-Kantorovich iteration on
$F(y)=y'-P(y)=0$ (Newton ``lifting''). Given $y_k$, compute $\delta_k$
from the linearized equation
\(
\delta_k'(t)-P'(y_k(t))\,\delta_k(t)=-(y_k'(t)-P(y_k(t))),\ \delta_k(0)=0,
\)
and update $y_{k+1}=y_k+\delta_k$. Computing modulo $t^{N_k}$, the
update works modulo $t^{2N_k}$: the number of correct coefficients
doubles at each step. Picard is sequential in precision ($O(n)$ steps
for error $2^{-n}$); Newton is not ($O(\log n)$ steps). Both are
primitive-recursive in form; only the rate of convergence differs.
Newton yields a polynomial-time method for the solution \emph{as a
formal power series}.

\paragraph{Analytic continuation and sequentiality.}
To approximate the real solution on $[0,T]$, one evaluates the series.
Let $R$ be the radius of convergence at $0$. When $T<R$, Cauchy
estimates give truncation order polynomial in $n$ and in
$\log(1/(1-T/R))$; as $T\to R$ the cost blows up. For $T$ beyond $R$,
analytic continuation proceeds via patches
$0=t_0<t_1<\cdots<t_m=T$, inherently \emph{sequential in the number of
patches $m$}. Newton reduces the cost within each patch but does not
remove the dependence on $m$.

\paragraph{From intrinsic parameters to input-size complexity.}
Can we devise a polynomial algorithm for~\eqref{eq:poly-ode} in the
sense of computable analysis, i.e.\ outputting a $2^{-n}$-approximation
of $y(T)$ in time polynomial in $n$ and in the binary sizes of $T$ and
$y_0$ \cite{Ko91}? The previous discussion yields primitive-recursive
solutions in $n$, but infeasibility in the size of the other arguments
when they take integer values. Assume for contradiction that such a
uniform solver exists. Taking $P(u)=u$, it yields a procedure for
$E(x)=2^x$ over the integers, expressing $E$ as a bounded recursion on
notation---impossible. The point is structural, not specific to~$E$: a
polynomial-time solver in input size would express some function in the
form~\eqref{sec:impl}, contradicting implicit-complexity bounds.

This illustrates the general point: implicit complexity distinguishes
algorithms polynomial in intrinsic analytic parameters (length, radii of
continuation) from those polynomial in the discrete input size, and
explains why moving from one to the other is provably impossible without
additional structure.

\begin{remark}
This resonates with the parameterized approach in computable analysis
\cite{kawamura2015computational,kawamura2018parameterized}, where
complexity is measured relative to analytic parameters (moduli,
bounds, radii). Many efficiency barriers can be explained at the level
of recursion schemes, independently of representation.
\end{remark}

\section{Framework and statements of results}
\label{sec:statements}

\subsection{State of the art.}
Our work connects several strands of research that have largely
developed independently: classical ODE theory, computable analysis,
analog models of computation, and logical classifications of
computational strength.

Classical existence and uniqueness results for ODEs, from Cauchy and
Peano to Lipschitz and Osgood \cite{Hartman,CL72}, are traditionally
formulated as analytic statements. Our contribution is to make
explicit their \emph{algorithmic content}: which computational
principles they justify, and which they provably do not.

Solving an ODE can be seen as computing a fixed point of an operator
such as the Picard operator. Fixed-point constructions are central
in logic, semantics, and analysis: in descriptive complexity, least
and greatest fixed points characterise $\PTIME$ and $\PSPACE$
\cite{Imm99,Var82}, and continuous fixed-point problems span a wide
spectrum, from polynomial time to $\PLS$- and $\PPAD$-completeness
\cite{Pap94b,Daskalakis06}. In computable analysis, even a unique
fixed point may be noncomputable \cite{Ko91,PORI17}, while
approximate ones are always computable with complexity governed by
moduli of continuity. Our classification identifies which fixed-point
principles---deterministic, compactness-based, or limit-based---are
required at each regularity level.

Polynomial differential equations have long been studied as
continuous-time models of computation; see the survey
\cite{bournez2021survey}. The computability and complexity of ODE
solving have been extensively studied in computable analysis
\cite{Ko91,GracaZhongHandbook}, which extends classical computability
to $\mathbb{R}^n$ via the Type-2 Theory of Effectivity
\cite{Wei00,brattka2008tutorial}. On compact domains, non-uniqueness
leads to noncomputability \cite{Ko83,PR79,Abe71}, while uniqueness
restores computability \cite{Ruo96,collins2008effectivesimpl,collins2009effective}
but allows arbitrarily high complexity \cite{Ko91,Mi70}. Under
Lipschitz assumptions, solutions are $\FPspace$-computable and
$\Pspace$-completeness can already arise
\cite{kawamura2009lipschitz}, even for $\mathcal{C}^\infty$ vector
fields \cite{kawamura2014computational}; polynomial-time solvability
on compact domains would thus collapse $\Ptime$ and $\Pspace$. Over
non-compact domains, polynomial ODEs simulate Turing machines
\cite{dsg06a}, yielding undecidability for very simple analytic
systems. Controlling growth recovers tractability: this motivates
parameterised approaches bounding solution growth
\cite{bournez2012complexity}, later refined by a single intrinsic
parameter, the length of the solution curve \cite{PoulyGraca16},
underpinning the principle \emph{time complexity = length}.

Our work adopts the perspective of \emph{reverse mathematics}
\cite{simpson2009subsystems}, where Simpson showed that Cauchy-Peano is
equivalent to $\WKL_0$ and that Cauchy-Lipschitz is provable in $\RCA_0$
\cite{Simpson84Peano}. The arguments we expand in
Section~\ref{sec:rm-example} are drawn from a short discussion
in~\cite{Simpson84Peano}, not reproduced in the monograph
\cite{simpson2009subsystems}. While the Big Five picture is known to
be incomplete \cite{SeetapunS95,Liu12}, the rest of the reverse
mathematics zoo concerns mostly combinatorial principles; ODEs, by
contrast, are search problems over the continuum, where the Big Five
retain a natural algorithmic interpretation. We extend this analysis
across all five levels and interpret each algorithmically as a
distinct form of search, yielding a representation-independent
classification indexed by regularity. 

In the computable analysis tradiction, the Weihrauch lattice classifies the uniform computational content of
mathematical problems \cite{brattka2021weihrauch}.
While powerful for fine-grained uniform reductions, Weihrauch complexity
is sensitive to representations and interfaces. Our reverse-mathematical
approach is complementary: it isolates the logical principles required for
solvability, 
connects naturally to implicit complexity and recursion schemas, abstracting from the fine-grained aspects of representations of this lattice.

\subsection{Our contributions per level.}

We present one ODE-related result for each level of the Big Five.
Some of these results are adaptations, with new proofs, of known
statements; others are, to the best of our knowledge, nowhere in the
literature.

\begin{itemize}
\item \emph{$\RCA_0$ (Cauchy-Lipschitz).} The classification of
Cauchy-Lipschitz in $\RCA_0$ is due to Simpson \cite{Simpson84Peano}.
We derive it as a corollary of our Cauchy-Osgood result. The proof
is entirely constructive, and avoids any compactness argument by
exploiting the Lipschitz hypothesis in \emph{both} variables rather
than only in the second one.

\item \emph{$\WKL_0$ (Cauchy-Osgood).} We present an
existence and uniqueness theorem under Osgood conditions. The
uniqueness part is classical; the existence part we have not found in
the literature. Our proof is tailored so that a single step requires
a $\WKL_0$-compactness argument: the Osgood modulus lets us replace
all other compactness calls by constructive arguments, tightening
Simpson's proof for Peano. A side benefit is that the proof
specialises cleanly to $\RCA_0$ when the modulus is Lipschitz, giving
the corollary above.

\item \emph{$\ACA_0$ (maximal Cauchy-Peano).} We show a maximal
version of the classical Cauchy-Peano existence theorem: any solution
we construct cannot be extended---to a larger domain---while remaining
a solution of the IVP. The original local version requires
$\WKL_0$ to construct one solution that may or may not extend; our
maximal version additionally requires multiple calls to the
Bolzano-Weierstra\ss{} theorem (equivalent to $\ACA_0$) to repeatedly
extract convergent subsequences from sequences in $\mathbb{R}^d$.
This result is, as far as we know, not available in the usual
literature.

\item \emph{$\mathrm{ATR}_0$ and $\Pi^1_1\text{-}\mathrm{CA}_0$
(Cauchy-Bournez-Gozzi).} For the last two levels we change the
nature of the result: from existence or uniqueness to
\emph{definability} of the solution, assuming such a solution exists.
We rely on a theorem of Bournez and Gozzi
\cite{gozzi2024STACSJournal,StacsBournezGozzi2024} that analytically
defines the solution via a transfinite recursion, originally motivated
by embedding transfinite Turing machine computations in ODEs. The
steps of the proof are theirs; our contribution is the
reverse-mathematical analysis of each transfinite step.

At level $\mathrm{ATR}_0$, we assume that the ordinal at which the
transfinite recursion stops is known a priori. This bound keeps the
logical strength exactly at $\mathrm{ATR}_0$. At level
$\Pi^1_1\text{-}\mathrm{CA}_0$, no a priori bound is available, and
the process may descend as deep as any countable ordinal, requiring
hyperarithmetical comprehension.
\end{itemize}

\subsection{The ODE theorems.}
We fix a rectangle
$\mathcal{R}=I\times\mathcal{D}\subset\mathbb{R}\times\mathbb{R}^n$
containing~$(t_0,y_0)$, to which $f$ is restricted throughout. We
refer to the associated problem~\eqref{eq:ivp} as \emph{the IVP}.

\begin{definition}
A \emph{local solution} of the IVP is a map
$y:[t_0,t_0+\epsilon]\to\mathbb{R}^d$, for some rational
$\epsilon>0$, satisfying the initial condition $y(t_0)=y_0$ and the
equation $y'(t)=f(t,y(t))$ on $[t_0,t_0+\epsilon]$. A \emph{maximal
solution} is a solution $y:J\to\mathbb{R}^d$ with $J\ni t_0$ admitting
no extension to a strictly larger interval that still solves the IVP.
\end{definition}

A prototypical existence theorem reads:
\begin{center}
\itshape Let $f:\mathcal{R}\to\mathbb{R}^d$ satisfy some prescribed
regularity property. Then the IVP has a solution.
\end{center}
The logical strength required to prove such a statement depends on the
regularity of $f$, and more precisely on the quantitative information
given about it: the mere existence of an object does not entail that
it can be recursively constructed.

We focus on the following five theorems, one per level of the Big
Five. The first three are classical existence (and uniqueness) results
with increasingly weak regularity.

\begin{theorem}[Cauchy-Lipschitz]\label{CLip}
If $f:\mathcal{R}\to\mathbb{R}^d$ is Lipschitz in both variables, the
IVP has a unique local solution.
\end{theorem}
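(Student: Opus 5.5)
We will prove the statement inside $\RCA_0$, in line with the paper's announced strategy. The plan is a fully constructive Euler-polygon (equivalently Picard) scheme. The Lipschitz hypothesis in \emph{both} variables supplies an explicit modulus of convergence, so no compactness principle is ever invoked.

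First we fix the data. Let $L$ be a rational Lipschitz constant for $f$ on $\mathcal{R}$, jointly in $(t,y)$. Let $M$ be a rational bound for $\norm{f}$ on $\mathcal{R}$. Since $f$ is Lipschitz on a bounded rectangle, $M$ can be taken to be $\norm{f(t_0,y_0)}+L\cdot\mathrm{diam}(\mathcal{R})$, which is computed directly from a code of $f$; this avoids the extreme value theorem, which would be a $\WKL$-type call. Choose a rational $\epsilon>0$ so that $[t_0,t_0+\epsilon]\subseteq I$ and the closed ball of radius $M\epsilon$ around $y_0$ lies in $\mathcal{D}$.

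Next we construct the approximants. For each $k$, let $y_k$ be the Euler polygon with step $h_k=\epsilon 2^{-k}$, evaluated on dyadic rationals. Each $y_k$ is a piecewise-linear function with rational-approximable nodes, obtained by primitive recursion from the code of $f$, and the sequence $(y_k)_k$ exists by $\Delta^0_1$ comprehension. Joint Lipschitzness gives the local truncation error explicitly:
\[
\norm{f(s,y_k(s))-f(t_j,y_k(t_j))}\le L(1+M)h_k \quad\text{for } s\in[t_j,t_{j+1}].
\]
A discrete Gr\"onwall inequality then yields $\sup_t\norm{y_k(t)-y_{k+1}(t)}\le C h_k$, where $C=(1+M)\epsilon e^{L\epsilon}$ (with a rational upper bound for the exponential). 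This is the key point: because we have Lipschitzness in $t$ and not mere continuity, the error bound is an explicit geometric rate. Consequently $(y_k)$ is uniformly Cauchy with a \emph{computable} modulus, so its limit $y$ exists in $\RCA_0$ as a code for a continuous function. This avoids Bolzano--Weierstra\ss{} or Arzel\`a--Ascoli, which would cost $\ACA_0$ or $\WKL_0$.

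We then verify that $y$ is a solution. Using the same estimates, we show
\[
y(t)=y_0+\int_{t_0}^t f(s,y(s))\,ds
\]
on $[t_0,t_0+\epsilon]$. This uses the Riemann integral of continuous functions carrying a modulus of uniform continuity, which is available in $\RCA_0$; here the modulus is explicitly $L(1+M)$-Lipschitz. The derivative statement $y'=f(t,y)$ follows by the fundamental theorem of calculus for such functions. Uniqueness comes from the usual Gr\"onwall argument. If $z$ is another solution, then $u(t)=\norm{y(t)-z(t)}$ satisfies $u\le L\int u$. Bounding $u\le B$ by its starting value and iterating $n$ times gives $u\le B(L\epsilon)^n/n!\to 0$. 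This is a purely arithmetical induction and needs only $\Sigma^0_1$-induction.

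The main obstacle is the uniform sup-bound in the Cauchy estimate. We need every bound to be witnessed by explicitly given rationals, with no appeal to maxima of continuous functions on compact sets. Our first attempt is to prove all inequalities pointwise at dyadic rationals and extend them by the explicit Lipschitz modulus. If some step nonetheless demands a supremum, we would replace it by an a priori rational bound derived from $L$ and the data at $(t_0,y_0)$.
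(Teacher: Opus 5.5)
Your proof is correct in outline, and for existence it follows the paper's route. The bound $M$, which the paper identifies as the compactness step of the textbook proof, is read off from Lipschitzness in $(t,y)$. The solution is then the limit of left-endpoint Euler polygons, i.e.\ the existence part of the proof of \cref{COsgood} with $\omega(r)=Lr$; the paper alternatively defers the remaining steps to Simpson's Theorem IV.8.3.

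Where you genuinely go further is in how you use the Lipschitz hypothesis afterwards, and this is what makes the $\RCA_0$ claim visible step by step rather than by citation. Three differences matter:
\begin{enumerate}
\item You use the $t$-Lipschitz bound for the discretisation error $L(1+M)h_k$. The Osgood proof instead gets $\epsilon_k\to0$ from uniform continuity of $f$ on $\mathcal{R}$, which is itself a $\WKL_0$ principle by \cref{WKLassertions}.
\item Your Gr\"onwall estimate supplies an explicit modulus of uniform convergence. The Osgood proof concludes from completeness of $C(I,\mathbb{R}^n)$, which in $\RCA_0$ is only available for sequences carrying such a modulus; convergence of arbitrary Cauchy sequences has the strength of $\ACA_0$.
\item For uniqueness you iterate $u\le L\int u$ to $u\le B(L\epsilon)^n/n!$ by induction. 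The Osgood argument instead needs the last zero $t^*$ of $u$ from \cref{coroWKL}(ii), a principle the paper itself only places in $\WKL_0$.
\end{enumerate}
So reading Cauchy-Lipschitz off the Cauchy-Osgood proof buys economy and the picture ``Lipschitz is Osgood with $\omega(r)=Lr$'', but it would leave these non-$\RCA_0$ steps in place. Your route removes them.

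A few small repairs are needed.
\begin{itemize}
\item The bound $B$ should be the a priori bound $2M\epsilon$, coming from $\norm{y(t)-y_0},\norm{z(t)-y_0}\le M(t-t_0)$. It should not be ``the starting value'' $u(t_0)=0$.
\item For an arbitrary competitor $z$ you must first prove this $M$-Lipschitz estimate. This is a mean value inequality, provable in $\RCA_0$ by bisection. Without a modulus of uniform continuity for $z$, neither $\int u$ nor the integral form of $z$ is available in $\RCA_0$.
\item Your $C$ is off by a factor of order $L$; with nested dyadic meshes one gets $C=(1+M)(e^{L\epsilon}-1)/2$. This is harmless, since any explicit rational constant yields the modulus.
\end{itemize}
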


Cauchy-Lipschitz, also known as Picard-Lindel\"of, is the
textbook existence-and-uniqueness result. Our formulation is slightly
stronger than the classical one (which requires $f$ continuous and
Lipschitz in its second variable only), and avoids any use of
compactness; see Section~\ref{sec:populating}.

\begin{theorem}[Cauchy-Osgood]\label{COsgood}
Let $f:\mathcal{R}\to\mathbb{R}^d$ be continuous and
$\omega:[0,\delta)\to[0,\infty)$ continuous and increasing with
$\omega(0)=0$, satisfying
$$\|f(t,y)-f(t,z)\|\leq\omega(\|y-z\|)$$ for all
$(t,y),(t,z)\in\mathcal{R}$ with $\|y-z\|\leq\delta$, and the Osgood
blow-up condition $$\int_0^\epsilon dr/\omega(r)=+\infty$$ for all
$\epsilon>0$. Then the IVP has a unique local solution.
\end{theorem}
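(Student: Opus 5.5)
Our plan is to prove Theorem~\ref{COsgood} in two parts, uniqueness and existence. We arrange the existence argument so that exactly one step uses compactness (a $\WKL_0$ call); the modulus $\omega$ replaces every other compactness argument by an explicit estimate.

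\emph{Uniqueness.} Let $y,z$ be two local solutions on $[t_0,t_0+\epsilon]$, and set $u(t)=\|y(t)-z(t)\|$, so $u(t_0)=0$. Shrinking $\epsilon$ if needed, we may assume $u\le\delta$. Then
\[
u(t)\le\int_{t_0}^t\omega(u(s))\,ds .
\]
Suppose $u(t_1)>0$ for some $t_1$. Let $\tau<t_1$ be the last zero of $u$ before $t_1$, and define $U(t)=\int_{\tau}^t\omega(u)$, a Bihari-type majorant. Where $U>0$ we have $U'=\omega(u)\le\omega(U)$, since $\omega$ is increasing. Hence
\[
\int_{U(\tau+\eta)}^{U(t_1)}\frac{dr}{\omega(r)}\le t_1-\tau .
\]
As $\eta\to0$ we have $U(\tau+\eta)\to0$, so the left side tends to $+\infty$ by the Osgood condition, a contradiction. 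To keep this within the weak base theory, we phrase it with rational approximations. The divergence hypothesis gives, for each $\epsilon$ and each bound $M$, some $\eta$ with $\int_\eta^\epsilon dr/\omega>M$, and this makes the argument a finite, arithmetical comparison.

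\emph{Existence.} We build Euler polygons $y_n$ with step $2^{-n}$ on $[t_0,t_0+h]$. Here $h$ is chosen from a bound $M\ge\|f\|$ on a compact sub-rectangle so that the polygons stay in $\mathcal{R}$. The compactness step is the single place where $\WKL_0$ enters. In $\RCA_0$ a continuous $f$ on a compact rectangle need not be bounded, nor uniformly continuous with a modulus; we invoke $\WKL_0$ once to obtain a bound $M$ and a modulus of uniform continuity $\mu$ in $t$. Each $y_n$ is then $M$-Lipschitz, and $\|y_n'-f(t,y_n)\|\le\rho_n$ almost everywhere, with $\rho_n\to0$ explicitly via $\mu$ and $\omega$.

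We then avoid Arzel\`a--Ascoli, which is the usual second compactness call and would cost $\ACA_0$-like extraction in general. Instead we compare $y_n$ and $y_m$ directly by the same Osgood inequality. With $v=\|y_n-y_m\|$ we get
\[
v(t)\le\int_{t_0}^t\omega(v)+(\rho_n+\rho_m)h .
\]
The Bihari estimate then bounds $v$ by $\Phi^{-1}\bigl(\Phi(\rho_n+\rho_m)+h\bigr)$, where $\Phi(x)=\int_{\cdot}^x dr/\omega$; this tends to $0$ because $\Phi(0^+)=-\infty$. To get an actual Cauchy modulus we need a computable rate, and we extract it from the divergence data, using rational moments of $1/\omega$. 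The sequence $(y_n)$ is then uniformly Cauchy with a modulus, so its limit $y$ exists in $\RCA_0$. Passing to the limit in $y_n(t)=y_0+\int_{t_0}^t f(s,y_n)\,ds+O(\rho_n)$, using $\omega$ for the spatial modulus, shows that $y$ is a solution.

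\emph{Main obstacle.} The hardest step is making the Bihari/Osgood estimate effective, i.e.\ turning the purely qualitative divergence $\int_0 dr/\omega=\infty$ into an explicit convergence rate for the Euler scheme. We would try to state the hypothesis in its coded form (a function witnessing divergence) and check that inverting $\Phi$ along rationals is $\RCA_0$-definable. The same argument specialises to Theorem~\ref{CLip}: taking $\omega(r)=Lr$ and Lipschitz dependence in $t$ yields $M$ and $\mu$ for free, so the $\WKL_0$ call disappears and the proof goes through in $\RCA_0$.
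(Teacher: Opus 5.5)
Your proposal is correct and follows essentially the paper's route: one $\WKL_0$ call to bound $f$ (and to obtain uniform continuity), Euler polygons shown to be uniformly Cauchy by applying the Osgood integral inequality directly to two polygons instead of using Arzel\`a--Ascoli, passage to the limit in the integral equation, and uniqueness by integrating $dr/\omega$ from the last zero of $\|y-z\|$. Your Bihari bound $v\le\Phi^{-1}\bigl(\Phi((\rho_n+\rho_m)h)+h\bigr)\to 0$ is in fact the correct form of the paper's comparison step, which as written concludes $S(t)=0$ instead of $S(t)\to 0$ (note the factor $h$ inside $\Phi$, and take $2Mh<\delta$ so that $\omega$ applies); the same estimate with an arbitrary constant $c>0$ in place of $(\rho_n+\rho_m)h$ also gives uniqueness without having to locate a last zero.
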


The Osgood modulus $\omega$ generalises the Lipschitz condition (which
corresponds to $\omega(r)=Lr$) and its blow-up condition is the
minimal requirement that still forces uniqueness. The uniqueness part
is the classical result; the existence part, to the best of our
knowledge, is new. Cauchy-Lipschitz becomes a direct corollary of
Cauchy-Osgood.

\begin{theorem}[Cauchy-Peano, maximal]\label{CPeano}
If $f:\mathcal{R}\to\mathbb{R}^d$ is continuous, the IVP has a maximal
solution.
\end{theorem}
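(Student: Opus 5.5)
Since the paper treats Theorem~\ref{CPeano} as the $\ACA_0$-level statement, I plan to give a proof whose only non-$\WKL_0$ ingredient is the Bolzano--Weierstra\ss{} theorem for sequences in $\mathbb{R}^d$. The argument has three parts: construct a chain of local solutions on nested intervals, take its union, and decide at the right endpoint whether the solution can be continued.

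\textbf{Step 1: one-step extension.} I will show that any solution $y$ on a closed interval $[t_0,s]$ with $(s,y(s))$ in the interior of $\mathcal{R}$ can be extended. I apply the local Cauchy--Peano theorem (Simpson's $\WKL_0$ result, hence available in $\ACA_0$) to the IVP with initial data $(s,y(s))$. Because $f$ is continuous on a compact rectangle around that point, it is bounded there by some $M$. This yields a local solution on $[s,s+\epsilon]$ with $\epsilon$ bounded below in terms of $M$ and the distance to $\partial\mathcal{R}$. The concatenation is $C^1$ at $s$, since both one-sided derivatives equal $f(s,y(s))$.

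\textbf{Step 2: the greedy sequence.} I will iterate Step~1, at stage $k$ taking a step of size roughly half the admissible length. This gives times $s_0=t_0<s_1<\cdots$ and a coherent sequence of solutions; the sequence of codes is definable because each step is a single application of Peano relative to a parameter.

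Set $s^*=\sup s_k$, which exists in $\ACA_0$, and $y=\bigcup y_k$ on $J=[t_0,s^*)$. One of two cases occurs.
\begin{itemize}
\item If $(s_k,y(s_k))$ approaches $\partial\mathcal{R}$, or if $s^*$ equals the right end of $I$, then the solution is maximal, because any extension would have to leave $\mathcal{R}$.
\item Otherwise the admissible step lengths cannot all be tiny, because $f$ is bounded on compact subsets of the interior. The greedy choice then forces every such bounded situation to be exhausted.
\end{itemize}

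\textbf{Step 3: the endpoint (main obstacle).} The case analysis needs the limit behaviour of $y(t)$ as $t\to s^*$. Here I use Bolzano--Weierstra\ss{}, equivalent to $\ACA_0$, to extract a convergent subsequence of $y(s_k)$ (bounded, since $\mathcal{D}$ is bounded), obtaining a cluster point $z$.

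If $(s^*,z)$ lies in the interior, continuity of $f$ bounds $\|y'\|$ near $s^*$. Then $y$ is Cauchy, so the full limit exists and equals $z$, and the closed extension can be restarted with Step~1, which contradicts the greedy choice. If $(s^*,z)$ lies on the boundary, I argue maximality directly: any extension $\tilde y$ on $[t_0,s^*+\eta)$ agrees with $y$ on $J$, so $\tilde y(s^*)=z\in\partial\mathcal{R}$ and $\tilde y$ leaves the domain.

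Formalizing ``maximal'' as a $\Pi^1_1$ clause over all candidate extensions requires care, and the case split (boundary or interior) is arithmetical, which $\ACA_0$ permits. I expect the main difficulty to be showing that the greedy step sizes do not shrink spuriously, i.e.\ that they are a fixed fraction of the distance to the boundary.
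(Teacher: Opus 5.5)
Your continuation argument is the standard proof for an \emph{open} domain, where reaching $\partial\mathcal{R}$ means leaving the domain. In the paper, however, $\mathcal{R}=I\times\mathcal{D}$ is a compact rectangle; this is how the global bound $M\geq\norm{f}_\infty$ is obtained. In that setting your endpoint analysis fails. A solution arriving at $(s^*,z)\in\partial\mathcal{R}$ with $s^*<\sup I$ need not leave $\mathcal{R}$, since $f$ may be tangent to $\partial\mathcal{D}$ or point back into $\mathcal{D}$.

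For instance, take $f(t,y)=2(1-t)$ on $[0,2]\times[-1,1]$ with $t_0=0$, $y_0=0$. The solution $y(t)=2t-t^2$ exists on all of $[0,2]$ but touches $y=1$ at $t=1$. Your steps are a fraction of the distance $(1-s)^2$ to $\partial\mathcal{R}$, so the times $s_k$ accumulate at $s^*=1$. Both your first bullet and your boundary case would then declare the solution on $[0,1)$ maximal, which it is not.

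There is a second problem in the compact setting. There $y$ is $M$-Lipschitz, so $\lim_{t\to s^*}y(t)$ always exists and lies in the closed set $\mathcal{R}$. Hence a solution on a half-open $[t_0,s^*)$ is never maximal: it extends trivially to $[t_0,s^*]$.

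This is hard to patch within a single greedy branch. Deciding whether one can continue from a boundary point is a search over all (possibly non-unique) solutions, and such restarts can accumulate, pushing you towards a transfinite iteration.

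The missing idea is the one the paper's proof is built on: do not follow one branch. Instead, set $\beta=\sup T$ with $T=\{t>t_0\mid\text{some solution exists on }[t_0,t]\}$, and choose solutions $\phi_n$ on $[t_0,q_n]$ with $q_n\uparrow\beta$. Then extract a single limit solution by iterated Bolzano--Weierstra\ss{} and diagonalisation over a dense subset of $[t_0,\beta)$. This is Arzel\`a--Ascoli, and it is where $\mathrm{ACA}_0$ is genuinely used. The uniform Lipschitz bound and the closedness of $\mathcal{R}$ give a solution on $[t_0,\beta]$. Maximality is then immediate from the definition of $\beta$, since any extension would put a time larger than $\beta$ into $T$; no case analysis at the endpoint is needed. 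The paper's closing claim that $\beta=\sup I$ actually has the same boundary problem, e.g.\ for $f\equiv1$ on $[0,2]\times[-1,1]$, but that step is superfluous.

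Separately, your justification that the sequence of greedy steps is definable ``because each step is a single application of Peano'' does not hold in reverse mathematics. A dependent $\omega$-iteration of a $\Pi^1_2$ existence theorem does not by itself produce a set. For example, iterating leftmost solutions yields a sequence computable only from $f^{(\omega)}$, which need not exist in a model of $\mathrm{ACA}_0$. You would have to code whole runs as paths through a single tree and apply WKL once.
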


The classical Cauchy-Peano theorem only asserts the existence of a
\emph{local} solution on some interval determined by
$\sup_{\mathcal{R}}\|f\|$; our maximal version, which guarantees
that the domain of the solution cannot be strictly extended, is, to
our knowledge, new.

The last two theorems differ in spirit. They do not assert existence;
they assert that a given solution, known to be unique, is
\emph{definable} in a given subsystem, provided the vector field
satisfies an additional solvability hypothesis.

\begin{theorem}[Cauchy-Bournez-Gozzi, bounded]\label{CBG1}
If the IVP has a unique solution and $f$ is $\alpha$-solvable (see
Section~\ref{sec:populating}), then the solution is definable in
$\mathrm{ATR}_0$.
\end{theorem}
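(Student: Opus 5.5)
We take the notion of $\alpha$-solvability from Bournez and Gozzi \cite{gozzi2024STACSJournal,StacsBournezGozzi2024}. We assume it means roughly the following: there is a well-ordering $\alpha$ and a transfinite sequence of approximation operators. Starting from the whole rectangle, at each successor stage $\beta+1$ one restricts to the region where $f$ is continuous relative to the current closed set. One then solves there a local ODE of Peano/Osgood type, and so extends or refines the partial solution. At limit stages one takes intersections or limits of the previous stages. The process stabilises by stage $\alpha$ with the unique solution. The plan is to formalise this as a single instance of arithmetical transfinite recursion along the given well-ordering $\alpha$. This is exactly the principle that $\mathrm{ATR}_0$ provides, once each step is shown to be arithmetical in the previous stage.

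\textbf{Step 1 (coding).} We code the state at stage $\beta$ as a set $X_\beta\subseteq\mathbb{N}$. It contains a code for the current closed set $K_\beta\subseteq\mathcal{R}$ (or the current time interval reached), together with a countable dense table of rational approximations to the partial solution $y_\beta$ built so far. These are the usual $\RCA_0$ codes for closed sets and continuous functions \cite{simpson2009subsystems}.

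\textbf{Step 2 (successor step is arithmetical).} We show that $X_{\beta+1}$ is arithmetically definable from $X_\beta$ and the code of $f$. The continuity region of a coded function relative to a closed set is a $\Pi^0_2$-type condition on rational balls, hence arithmetical. The local solving step is a Peano/Osgood-type existence on a compact piece. By uniqueness of the global solution, the solution produced there is unique. Its values can therefore be defined without any choice: $y(t)$ is the unique point $z$ such that for every $n$ there is a rational $\epsilon$-polygon close to $z$. By the uniqueness-plus-compactness argument (the Osgood modulus or the uniqueness hypothesis supplies a modulus of convergence), this is arithmetical. It needs at most $\ACA_0$, which is included in $\mathrm{ATR}_0$. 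In short, uniqueness turns the $\WKL$/Bolzano--Weierstra\ss{} search of the earlier levels into an arithmetical definition.

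\textbf{Step 3 (limit stages and the recursion).} At limit $\lambda<\alpha$ we set $X_\lambda$ to be the arithmetical combination of $\bigoplus_{\beta<\lambda}X_\beta$: intersection of closed sets, and union or limit of compatible partial solutions. Hence the whole hierarchy is given by one arithmetical formula $\theta(n,Y)$ with $X_\beta=\{n:\theta(n,\bigoplus_{\gamma<\beta}X_\gamma)\}$. Since $\alpha$ is given a priori as a well-ordering, $\mathrm{ATR}_0$ yields the hierarchy $\langle X_\beta\rangle_{\beta\le\alpha}$. Finally we prove, by arithmetical transfinite induction along $\alpha$ (available in $\mathrm{ATR}_0$), the invariant that every $X_\beta$ agrees with the true solution wherever it is defined. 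This invariant, together with $\alpha$-solvability, gives that $X_\alpha$ codes the full solution, which is therefore definable.

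\textbf{Main obstacle.} The delicate point is Step 2. We must check that each Bournez--Gozzi step is \emph{uniformly} arithmetical in the previous stage, with no hidden appeal to $\Pi^1_1$ facts. In particular, the statement ``the restricted field is continuous on this closed set'' and the local existence step must not require choosing a solution. We will first try to use the global uniqueness hypothesis to replace every existential choice by a definition of a unique object. If some step turns out to need the Bolzano--Weierstra\ss{} extraction, we will absorb it: the jump is available uniformly inside arithmetical transfinite recursion, by taking $\theta$ to refer to $\bigl(\bigoplus_{\gamma<\beta}X_\gamma\bigr)'$.
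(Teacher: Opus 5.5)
Your proposal follows the paper's route. It runs a transfinite recursion of length $\alpha$ along the discontinuity hierarchy $\mathcal{R}^{(\gamma)}$, with each stage an $\ACA_0$-level local solve where $f$ restricted to the current closed set is continuous, then glues the pieces, and $\alpha$-solvability ensures everything is covered, so $\mathrm{ATR}_0$ suffices. The one difference is the local step: the paper invokes its maximal Cauchy--Peano theorem on each of the countably many time-intervals of the projection of $\mathcal{R}^{(\gamma)}\setminus\mathcal{R}^{(\gamma+1)}$, whereas you make each stage an explicit, choice-free arithmetical operator (via uniqueness, or a canonical jump-based selection), which is what the arithmetical transfinite recursion scheme formally requires and what the paper's sketch leaves implicit.
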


\begin{theorem}[Cauchy-Bournez-Gozzi, unbounded]\label{CBG2}
If the IVP has a unique solution and $f$ is solvable (see
Section~\ref{sec:populating}), then the solution is definable in
$\Pi^1_1\text{-}\mathrm{CA}_0$.
\end{theorem}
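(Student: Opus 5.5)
We will follow the Bournez--Gozzi construction \cite{gozzi2024STACSJournal,StacsBournezGozzi2024} and check that each of its transfinite steps can be carried out in $\Pi^1_1\text{-}\mathrm{CA}_0$. Solvability of $f$ means that the solution is obtained by a transfinite sequence of approximation stages indexed by countable ordinals. At successor stages, one passes to the unique solution on a subdomain where $f$ has become continuous enough for a Cauchy-Peano or Osgood-type step. At limit stages, one takes the restriction or limit of the previous approximations. The process stabilises at some countable ordinal $\beta$, and the stage-$\beta$ object is the solution. The bounded version, Theorem~\ref{CBG1}, is the case where $\beta\leq\alpha$ is given in advance, with $\alpha$ a well-ordering given as a set. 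The work here is to remove that a priori bound.

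\textbf{Step 1: formalise a single stage.} We express one stage as an arithmetical operator $\Phi(X)$. It takes a code $X$ for the current partial solution together with the set of points where $f$ is still ``bad'', and returns the next code. Extracting a solution on a region where $f$ is continuous costs at most $\ACA_0$, by Theorem~\ref{CPeano}, since the needed Bolzano--Weierstra\ss{} calls are arithmetical in $X$. Uniqueness of the solution makes the output canonical, so no choice is involved. Limit stages are arithmetical unions or limits of the earlier stages.

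\textbf{Step 2: iterate along well-orderings.} Given any well-ordering $W$, arithmetical transfinite recursion along $W$ yields the hierarchy $(H_w)_{w\in W}$ of stages. This is exactly the $\mathrm{ATR}_0$ argument used for Theorem~\ref{CBG1}. Since $\Pi^1_1\text{-}\mathrm{CA}_0\vdash\mathrm{ATR}_0$, the argument is available here.

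\textbf{Step 3: find the stopping ordinal.} This step is the main obstacle, because no $\alpha$ is supplied. We use $\Pi^1_1$ comprehension to form the set $\mathcal{O}_f$ of codes $e$ satisfying two conditions: $e$ codes a well-ordering, and the $\Phi$-hierarchy along $e$ has not yet stabilised. Being a well-ordering is $\Pi^1_1$. Given that, the second condition is equivalent, by uniqueness of the hierarchy, both to a $\Sigma^1_1$ and to a $\Pi^1_1$ formula. So $\mathcal{O}_f$ is $\Pi^1_1$-definable and exists.

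Solvability of $f$ asserts that stabilisation happens at some countable ordinal. Hence the order types coded in $\mathcal{O}_f$ are bounded. Using $\Pi^1_1\text{-}\mathrm{CA}_0$, we then build a single well-ordering $W^\ast$ that dominates them, for instance their sum in a canonical enumeration, which exists because $\mathcal{O}_f$ is a set. We need to show $W^\ast$ is genuinely well-founded. A descending sequence would project to a descending sequence in one summand, so this reduces to the fact that each summand is well-founded.

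Running Step 2 along $W^\ast$ produces a stage past stabilisation. The solution is then the arithmetically definable limit object $y$ read off from $H_{W^\ast}$.

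\textbf{Step 4: verify.} Working inside the system, we show that $y$ satisfies $y(t_0)=y_0$ and $y'(t)=f(t,y(t))$ on its domain. This follows from the stabilisation property together with the correctness of each stage. Uniqueness then identifies $y$ with the given solution. The delicate points are Step 3's comparison lemma, that hierarchies along different well-orderings agree on common initial segments (provable in $\mathrm{ATR}_0$), and making sure no hidden $\Sigma^1_1$ choice enters when bounding the stopping ordinal.
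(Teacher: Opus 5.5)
\textbf{Gap in Step 3.} Your Steps 1, 2 and 4 follow the paper's route. The paper's proof is only the remark that the recursion from Theorem~\ref{CBalpha} now runs to an unknown countable ordinal, and that $\Pi^1_1\text{-}\mathrm{CA}_0$ ``provides'' this. Step 3 is your attempt to make that remark precise, and it has a genuine gap.

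Because $\mathcal{O}_f$ is formed by $\Pi^1_1$ comprehension, it is a set of \emph{numbers}. Its elements name well-orderings obtained from the parameter $X$ coding the data (say, $X$-computable ones). The sum $W^\ast$ is computed from $\mathcal{O}_f$, not from $X$, so it need not be one of these named orderings. The diagonal argument you implicitly rely on is therefore unavailable: ``if the hierarchy along $W^\ast$ had not stabilised, $W^\ast$ would lie in $\mathcal{O}_f$ and exceed itself.'' What you actually get is only $|e|<|W^\ast|$ for every $e\in\mathcal{O}_f$, i.e.\ $W^\ast$ lies beyond every \emph{named} non-stabilised stage.

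Nothing in your argument excludes that the derivative sequence is still nonempty at every named stage. In that case $\mathcal{O}_f$ consists of all names of well-orderings, and $|W^\ast|$ is their supremum ($\omega_1^X$ in a $\beta$-model). The stage reached along $W^\ast$ then need not be past stabilisation. Your sentence ``hence the order types coded in $\mathcal{O}_f$ are bounded'' bounds $\mathcal{O}_f$ from above. What is needed is the reverse: that the named stages reach the stabilisation point.

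\textbf{What is missing.} You need a boundedness lemma: if the derivative process empties $\mathcal{R}$ at all, it already does so along some named well-ordering. Proving it is where solvability and the strength of $\Pi^1_1\text{-}\mathrm{CA}_0$ would genuinely enter. Compare how $\Pi^1_1\text{-}\mathrm{CA}_0$ proves the Cantor--Bendixson theorem: it obtains the perfect kernel by a direct $\Sigma^1_1$ definition rather than by bounding an iteration from outside. Once you have such a well-ordering, $W^\ast$ is superfluous.

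Note the other horn as well. Your phrase ``stabilisation happens at some countable ordinal'' suggests formalising solvability as the existence \emph{in the model} of a well-ordering $W$ along which the hierarchy reaches $\emptyset$. Under that reading, Step 2 run along $W$ already finishes the proof in $\mathrm{ATR}_0$, and the ``main obstacle'' disappears. The obstacle only exists for readings of solvability that do not hand you $W$, and those are exactly the ones Step 3 fails to handle.

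\textbf{A smaller point in Step 1.} Provability in $\mathrm{ACA}_0$ does not make a stage an arithmetical operator, and an arithmetical operator is what arithmetical transfinite recursion iterates. The $\omega$-fold nested Bolzano--Weierstra\ss{} extraction in the proof of Theorem~\ref{CPeano} is not obviously given by a single arithmetical formula. You should instead use uniqueness to define each stage explicitly, for instance as the uniform limit of Euler polygons, which converge to the solution when it is unique.
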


The distinction between these two theorems lies in whether an ordinal
bound on the transfinite recursion defining the solution is known a
priori ($\mathrm{ATR}_0$) or not ($\Pi^1_1\text{-}\mathrm{CA}_0$). In
both cases, the proof strategy follows Bournez and Gozzi
\cite{gozzi2024STACSJournal,StacsBournezGozzi2024}; our contribution is
the reverse-mathematical calibration of each transfinite step.


\subsection{Ingredients from reverse mathematics.}
All proofs rest on the following equivalences and consequences
\cite{simpson2009subsystems}.

\begin{theorem}[{\cite[IV.2.3]{simpson2009subsystems}}]\label{WKLassertions}
The following are equivalent over RCA\({}_0\).
\begin{enumerate}
\item Weak K\"oning's Lemma.
\item Every continuous function on the closed interval \(0\leq x\leq1\) is uniformly continuous.
\item Every bounded, uniformly continuous function on \(0\leq x\leq1\) has a supremum.
\end{enumerate}
\end{theorem}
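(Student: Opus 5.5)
The plan is to prove the cycle of implications $1\Rightarrow2\Rightarrow3\Rightarrow1$ over $\RCA_0$, working throughout with Simpson's coding of continuous functions as enumerations of pairs of rational neighbourhoods.

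\textbf{From WKL to uniform continuity ($1\Rightarrow2$).} Let $f$ be continuous on $[0,1]$ and fix $\varepsilon=2^{-k}$. The code of $f$ lets us enumerate, for each rational open interval $(a,b)$, a witness that the oscillation of $f$ on $(a,b)$ is below $\varepsilon/2$. Points of $[0,1]$ are coded by paths in the full binary tree via dyadic expansions. We let $T$ be the tree of finite strings $\sigma$ such that, within $|\sigma|$ steps of the enumeration, no witnessed interval of the above kind contains the closed dyadic interval determined by $\sigma$. This $T$ is $\Delta^0_1$-definable, so it exists in $\RCA_0$.

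An infinite path through $T$ would determine a point $x$ that lies in no witnessed neighbourhood. That contradicts the continuity of $f$ at $x$. So by WKL (in contrapositive form) $T$ is finite, and its height $n$ yields $\delta=2^{-n}$ (up to a factor 2 for overlapping intervals) as a uniform modulus for $\varepsilon$. Doing this uniformly in $k$ gives a modulus of uniform continuity by $\Delta^0_1$-comprehension.

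\textbf{From uniform continuity to suprema ($2\Rightarrow3$).} I will actually prove 3 outright in $\RCA_0$. Given a modulus $h$, the rational numbers $s_k=\max\{f(q)\text{ approx}:q\in 2^{-h(k)}\mathbb{Z}\cap[0,1]\}$, computed to precision $2^{-k}$, form a sequence converging at a known rate. Its limit is therefore a real in $\RCA_0$, and it is the supremum. Note that this only works if "uniformly continuous" in 3 means "with a modulus". If the hypothesis is merely the $\Pi^0_3$ statement without a modulus, a modulus cannot be extracted in $\RCA_0$, so this shortcut fails.

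\textbf{From suprema back to WKL ($3\Rightarrow1$).} This is the main obstacle. The route goes through the known equivalence (Simpson IV.1–2) of WKL with $\Sigma^0_1$-separation. Given disjoint $\Sigma^0_1$ sets $A,B$ that we wish to separate, the plan is to build a uniformly continuous function whose supremum encodes a separating set. I would use $f=\sum_n 3^{-n} g_n$, where the bumps $g_n$ are placed in a Cantor-set coding and are activated when $n$ enters $A$ or $B$. Reading the ternary digits of $\sup f$ then decides membership in a separating set. Alternatively, I would fall back on the contrapositive in the model $\mathrm{REC}$, via a computable uniformly continuous function with no computable supremum built from a recursively inseparable pair. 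That argument must then be formalised uniformly.

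The delicate points will be ensuring that the construction is genuinely uniformly continuous in the required coded sense, and that extracting the set from $\sup f$ needs only $\Delta^0_1$-comprehension.
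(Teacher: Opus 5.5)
Your step $1\Rightarrow2$ is the standard Heine--Borel argument and is fine; it even produces a modulus. The cycle, however, cannot close under a single reading of ``uniformly continuous''.

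Your own argument for $2\Rightarrow3$ shows that if the uniform continuity in item~3 comes with a modulus, then item~3 is a theorem of $\RCA_0$. Since $\mathrm{REC}\models\RCA_0+\neg\WKL$, the implication $3\Rightarrow1$ is then unprovable, and no bump construction can rescue it. Any $f$ you build in $\RCA_0$ together with a modulus has its supremum already in $\RCA_0$ (in $\mathrm{REC}$ that supremum is a computable real), so it cannot code a separating set of a recursively inseparable pair. You flag the reading issue but then proceed as if $3\Rightarrow1$ were still available.

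The paper gives no proof of this statement, only the citation to Simpson. In the modulus reading, item~3 as printed is simply too weak. The intended item is presumably the maximum principle for arbitrary continuous functions on $[0,1]$; in that form $3\Rightarrow1$ reduces to ``having a supremum implies bounded'' plus Simpson's boundedness argument.

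In the $\Pi^0_3$ reading without a modulus, the equivalence does hold, but two repairs are needed.

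First, your $2\Rightarrow3$ must actually use hypothesis~2. If item~2 yields a modulus (as your $1\Rightarrow2$ does), apply it to $f$ and then run your supremum computation. Otherwise go through~1: uniform continuity gives boundedness, and boundedness gives $\WKL$.

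Second, and this is the real missing idea, in $3\Rightarrow1$ the witness function must be \emph{partial}. A computable code defined at every real of $[0,1]$ has, by classical compactness and search, a computable modulus, hence a computable supremum. So a function built from $A,B$ alone, as in your sketch, can never have a supremum coding a separating set. You must use the assumption that no separating set exists in order to make $f$ total in the model. One way to do this:
\begin{itemize}
\item Suppose the disjoint $\Sigma^0_1$ sets $A,B$ have no separating set.
\item Let $C_t$ be the union of the closed stage-$t$ Cantor intervals indexed by strings of length $t$ consistent with $A_t,B_t$, and let $t_n$ be the stage at which $n$ enters $A$.
\item Put $f(x)=\sum_{n\in A}4^{-n}\max\bigl(0,1-2^{t_n}d(x,C_{t_n})\bigr)$.
\item Every real of the model leaves some $C_t$; otherwise its ternary digits would give a separating set. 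Hence $f$ is total.
\item $f$ is bounded, and it is uniformly continuous because bounded $\Sigma^0_1$ comprehension controls the finitely many terms with large $4^{-n}$.
\item Each $C_t$ contains rational points, so the supremum granted by~3 must equal $\sum_{n\in A}4^{-n}$. From its base-$4$ digits, $A$ itself is $\Delta^0_1$-decodable.
\item Since $A$ separates $A$ from $B$, this contradicts the assumption.
\end{itemize}
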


\begin{lemma}[Corollary]\label{coroWKL}
The following are provable in WKL\({}_0\) over RCA\({}_0\)
\begin{enumerate}
\item\label{magic M} If \(f:\mathcal{R}\to\mathbb{R}^d\) is continuous. Then there exists a rational \(M>0\) such that for all \((t,y)\in\mathcal{R}\) we have \(\norm{f(t,y)}\leq M\).
\item\label{tetoile} Every non-empty closed subset of \([0,1]\) has a supremum.
\end{enumerate}
\end{lemma}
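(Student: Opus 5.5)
Both items follow from \cref{WKLassertions}; the only work is translating the statements into its setting inside $\WKL_0$.

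\textbf{Item~\ref{magic M}.} The rectangle $\mathcal{R}=I\times\mathcal{D}$ is a product of compact intervals, so after a rational affine change of variables in each coordinate (available in $\RCA_0$) it becomes $[0,1]^{1+n}$. I would first pass from the interval version in \cref{WKLassertions} to the cube. The intended route is the standard one in Simpson's monograph: in $\WKL_0$, every continuous function on a compact metric space, in particular $[0,1]^{k}$, is uniformly continuous and bounded. If I want to stay closer to \cref{WKLassertions} itself, I would instead compose with a continuous space-filling surjection $[0,1]\to[0,1]^{k}$ that is definable in $\RCA_0$, such as a Peano/Hilbert curve given by explicit dyadic codes. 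Applying item~2 of \cref{WKLassertions} to the composite then gives a uniform modulus on $[0,1]$, and item~3 gives a supremum for each component. Once $\norm{f}$ is uniformly continuous, it suffices to evaluate it on a finite rational grid of mesh below the modulus for tolerance $1$. The maximum over this finite grid plus $1$, rounded up to a rational, is the required $M$. This is a finite computation, so $\RCA_0$ handles it once the modulus is given.

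\textbf{Item~\ref{tetoile}.} In reverse mathematics a closed set $C\subseteq[0,1]$ is coded as the complement of a union of rational open intervals, and here $C\neq\emptyset$. I would define $g(x)=d(x,C)$. The obstacle is that the distance function need not be available in $\RCA_0$ for a closed set given by such a code. To avoid it, I would use the characterisation: the supremum $s$ is characterised by $C\cap[s,1]\neq\emptyset$ and $C\cap(q,1]=\emptyset$ for all rationals $q>s$. Given a finite stage of the open code, the condition ``$[q,1]$ is covered by the first $k$ intervals'' is decidable. By Heine--Borel for $[0,1]$ (provable in $\WKL_0$), $[q,1]\cap C=\emptyset$ holds iff some finite stage covers $[q,1]$. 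This makes $\{q: [q,1]\cap C=\emptyset\}$ a $\Sigma^0_1$ set of rationals. That is not yet enough to build $s$ as a real in $\RCA_0$, which is the main obstacle.

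To overcome it, I would build a binary tree whose paths encode pairs consisting of a point $x\in C$ together with a certificate that $C\cap(x,1]$ is empty. The point $x\in C$ is coded by nested dyadic intervals not yet covered. For the certificate, at level $k$ one records the interval that is the rightmost one not covered by the first $k$ open intervals. This is a computable bounded search at each level, and it yields a decreasing sequence of rightmost uncovered dyadic intervals. Their left endpoints converge, since they are nested, which suggests $s$ can even be constructed directly. It then remains to check, using Heine--Borel from $\WKL_0$, that the limit lies in $C$ and that no point of $C$ exceeds it. Nonemptiness of $C$ together with Heine--Borel guarantees that some subinterval is always left uncovered at each level, which makes the construction well defined.
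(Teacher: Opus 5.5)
Your argument for item~1 is fine and is the intended one. In $\WKL_0$, a continuous real-valued function on a compact metric space such as $[0,1]^{1+n}$ is bounded and has a modulus of uniform continuity; this is the compact-metric-space form of \cref{WKLassertions} in Simpson's monograph. The finite-grid step after that is pure $\RCA_0$, and the space-filling curve is an unnecessary detour.

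Item~2, however, has a genuine gap. Your construction rests on the claim that the rightmost dyadic intervals left uncovered at successive stages are nested, and they are not. An open interval enumerated at stage $k+1$ may cover the current candidate entirely, and the next candidate then lies to its left, outside it. What you actually get is a non-increasing sequence of rationals (the right endpoints) tending to $\sup C$ with no modulus of convergence. Asserting that its limit exists is the monotone convergence principle, which is equivalent to $\ACA_0$ over $\RCA_0$. Had the intervals been nested, you would have computed $\sup C$ from the code, i.e.\ proved item~2 in $\RCA_0$; this already fails in $\mathrm{REC}$. The tree variant fails for a related reason. The condition ``$C\cap(x,1]=\emptyset$'' says that for every rational $q>x$ some finite stage covers $[q,1]$. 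That is a $\Pi^0_2$ condition on $x$, so it is not the set of paths through a binary tree, and $\WKL$ cannot supply such an $x$.

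No other argument can close the gap, because item~2 is equivalent to $\ACA_0$ over $\RCA_0$, hence not provable in $\WKL_0$. Take an increasing sequence of rationals $(c_n)$ in $[0,1]$, and let $C=\{x\in[0,1]: x\le 1-c_n \text{ for all } n\}$, the complement of $\bigcup_n(1-c_n,2)$. Then $C$ contains $0$, and if $s=\sup C$ then $\lim_n c_n=1-s$; so item~2 yields monotone convergence.

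Concretely, take $r=1-\sum_{n\in 0'}2^{-n-1}$, which is right-c.e.\ with $0'\le_T r$. The closed set $C=[0,r]$ has a computable code. In an $\omega$-model of $\WKL_0$ consisting of low sets, its supremum would have to be $r$, contradicting \cref{lem:low}. Your own remark that $\{q:[q,1]\cap C=\emptyset\}$ is merely $\Sigma^0_1$ is precisely the obstruction: it makes $\sup C$ right-c.e.\ in the code, and nothing better in general.

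The paper gives no argument for this ``corollary'', and item~2 does not follow from \cref{WKLassertions}. So the defect is in the statement rather than in your execution. Item~2 does hold in $\RCA_0$ for located closed sets, where $\sup C=1-d(1,C)$. It does not hold for arbitrary closed sets such as the zero set of $u$ in the Osgood uniqueness argument, so that use of item~2 needs a different justification at the $\WKL_0$ level.
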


\begin{theorem}[{\cite[IV.8.2]{simpson2009subsystems}}]\label{WKLequiv}
The following assertions are equivalent over RCA\({}_0\).
\begin{enumerate}
\item WKL\({}_0\).
\item\label{WKLexists} If \(f(x,y)\) is continuous and has a modulus of uniform continuity in some neighbourhood of \(x=0,\ y=0\), then the IVP has a continuously differentiable solution in some interval containing \(x=0\).
\end{enumerate}
\end{theorem}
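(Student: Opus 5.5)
The statement has two directions. Both are over $\RCA_0$, and the plan is to follow the architecture of Simpson's argument.

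\emph{From WKL to existence (1$\Rightarrow$2).} I would first use the given modulus of uniform continuity and Lemma~\ref{coroWKL}(\ref{magic M}) to obtain a rational bound $M$ with $\norm{f}\le M$ on a rectangle $[0,a]\times[-b,b]^d$. Shrinking the time interval to $h=\min(a,b/M)$ keeps all candidate solutions inside the rectangle.

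Rather than invoking Arzel\`a--Ascoli, which costs $\ACA_0$ in general, I would code the search for a solution as a bounded, finitely branching tree. At level $n$, a node is a finite list of rational values $y(k2^{-n}h)$ on a dyadic grid, with denominators bounded in terms of $n$, subject to two constraints. First, each increment must agree with $2^{-n}h\, f(t,y)$ up to an error $\varepsilon_n$ computed from the modulus. Second, the list must be compatible with its parent up to a slack that is summable in $n$. The tree is $\Delta^0_1$-definable from $f$ and its modulus, so $\RCA_0$ proves that it exists.

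The tree is infinite because Euler polygons at mesh $2^{-n}h$ are constructed by primitive recursion. After rounding, they furnish a node at every level, and the uniform continuity of $f$ guarantees that these nodes satisfy the constraints. Weak K\"onig's Lemma (applied after the standard coding of bounded trees as binary trees) then yields an infinite path. This path defines a uniformly convergent sequence of polygons whose limit $y$ satisfies
$$y(t)=\int_0^t f(s,y(s))\,ds,$$
and hence is a $C^1$ solution. The integral identity should follow from the local consistency conditions together with the modulus, without any further use of compactness.

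\emph{From existence to WKL (2$\Rightarrow$1).} Here I would use the standard equivalent of WKL as $\Sigma^0_1$-separation: for one-to-one $\alpha,\beta:\mathbb{N}\to\mathbb{N}$ with disjoint ranges, there is a set $X$ containing $\operatorname{ran}\alpha$ and disjoint from $\operatorname{ran}\beta$. The aim is to build a scalar $f(t,y)$ on the strips $t\in[2^{-n-1},2^{-n}]$.

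Each strip contains a non-uniqueness gadget in the spirit of $y'=\sqrt{|y|}$: there the solution may branch up or down, and it is then carried to a value at $c_n=2^{-n-1}$ which is either at least $\eta_n$ or at most $-\eta_n$, for an explicit rational gap $\eta_n$. On top of the gadget I would superimpose bumps of height $2^{-s}$ whenever $\alpha(s)=n$ (forcing the upward branch) or $\beta(s)=n$ (forcing the downward branch). Since each $n$ lies in at most one of the two ranges, and then with a unique $s$, these bumps are well defined.

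$f$ is then given by a uniformly convergent series whose partial sums are explicitly bounded. Hence $f$ is continuous and $\RCA_0$ proves that it has an explicit modulus of uniform continuity near the origin. Time runs backwards toward $0$, so I would reparametrise to make the initial point $t=0$. Given any solution $y$, I would set
$$X=\{n : y(c_n)>0\}.$$
Because $|y(c_n)|\ge\eta_n$, membership in $X$ is decided by a single rational approximation of $y(c_n)$, so $X$ exists by $\Delta^0_1$ comprehension. The forcing bumps then guarantee that $X$ separates $\operatorname{ran}\alpha$ from $\operatorname{ran}\beta$.

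\emph{Main obstacle.} I expect the hardest step to be the quantitative design of the gadget in the reversal. The bumps must be small enough, decaying like $2^{-s}$, to preserve uniform continuity. Yet they must still push every solution past the threshold $\pm\eta_n$ before $c_n$, even though the bump for $n$ may appear only at a late stage $s$. The key inequality must be verified within $\RCA_0$, using only $\Sigma^0_1$ induction: every solution starting in the ambiguous zone exits it with the correct sign. I would first try making the branching zone have width comparable to $2^{-s}$ itself, so that the forcing dominates the neutral dynamics there.
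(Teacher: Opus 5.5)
The paper does not prove this statement; it imports Simpson's Theorem~IV.8.2. Your forward direction is the standard bounded-tree compactness argument and is fine, with one precision. A level-$N$ node must come from restricting a \emph{single} Euler polygon of mesh $2^{-N}h$ to all coarser grids. Without uniqueness, Euler polygons of consecutive meshes can follow different branches, so they need not satisfy your parent-compatibility constraint.

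The reversal has a genuine gap. First, hypothesis~(2) gives a solution only on an interval around $0$ of uncontrolled length. So the strips must accumulate at the initial time and be traversed \emph{forward} from it. One then reads $X$ off for $n\geq N$ and completes it on $[0,N)$ by bounded $\Sigma^0_1$ comprehension, which $\mathrm{RCA}_0$ proves; making $f$ odd in $t$ covers intervals lying to the left of $0$. Your gadgets run toward $0$, with output at $c_n=2^{-n-1}$. If your reparametrisation is of the type $t\mapsto 1-t$, the accumulation point moves to $t=1$. A solution on $[0,\varepsilon]$ then crosses only finitely many strips and yields nothing. The branching must occur after entry at $2^{-n-1}$ and be read near $2^{-n}$.

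Second, and this is the real content of your ``main obstacle'': coming from $0$, the solution enters strip $n$ after infinitely many other gadgets. It carries an offset of fixed size, of order $\eta_{n+1}$, which has nothing to do with $s$. A bump of height $2^{-s}$ can produce an $s$-independent gap $\eta_n$ only if the solution sits \emph{exactly} at the branch point when the bump acts. Lipschitz dynamics only bring it exponentially close. Narrowing the branching zone to width $2^{-s}$ cannot help either: the neutral part of $f$ must be fixed before $s$ is known, and the entry offset does not depend on that width.

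The missing ingredient is a non-Lipschitz finite-time reset at the start of each strip. For example, take $y'=-\kappa(t)\,\mathrm{sign}(y)\sqrt{|y|}$ with $\int\kappa\geq 2\sqrt{B_n}$, where $B_n$ bounds the entry values. This sends every admissible entry value to exactly $0$. It is followed by an amplifier $y'=\psi(t)\bigl(\mathrm{sign}(y)\sqrt{|y|}+2^{-s}\bigr)$ in the $\alpha$-forced case, with the sign of the bump reversed for $\beta$. Then $y$ stays positive on the amplifier and $(\sqrt{y})'\geq\psi/2$. Hence $y\geq\bigl(\int\psi\bigr)^2/4=:\eta_n$ at the exit, independently of $s$.

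Finally, $|y(c_n)|\geq\eta_n$ fails for unforced $n$, since $y\equiv 0$ crosses the neutral gadget. So $\{n : y(c_n)>0\}$ is not obviously $\Delta^0_1$. Define $X$ from a rational approximation of precision $\eta_n/2$ instead; the gap is only needed on $\operatorname{ran}\alpha\cup\operatorname{ran}\beta$.
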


\begin{theorem}[{\cite[III.2.2]{simpson2009subsystems}}]\label{BWthm}
The following assertions are equivalent over RCA\({}_0\).
\begin{enumerate}
\item ACA\({}_0\).
\item The Bolzano-Weierstra{\ss} theorem: every bounded sequence of real numbers contains a convergent subsequence.
\end{enumerate}
\end{theorem}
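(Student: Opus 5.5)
We prove the two implications separately, working throughout inside $\RCA_0$. For the reverse direction we use the standard fact from \cite{simpson2009subsystems} that $\ACA_0$ is equivalent over $\RCA_0$ to the statement ``for every one-to-one function $g:\mathbb{N}\to\mathbb{N}$, the range of $g$ exists''.

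\emph{$\ACA_0$ implies Bolzano-Weierstra\ss{}.} Let $(x_n)$ be a bounded sequence, say $x_n\in[a,b]$ with $a,b$ rational. We run the usual bisection argument, and the only point needing care is that each choice must be arithmetically definable.
\begin{enumerate}
\item Define the nested dyadic intervals $I_0=[a,b]$ and $I_{k+1}$. At each step $I_{k+1}$ is the left half of $I_k$ if $\{n : x_n\in I_k^{\text{left}}\}$ is infinite, and the right half otherwise.
\item Membership of a real in a closed rational interval is $\Pi^0_1$, and ``infinitely many $n$'' adds two more quantifiers. So the sequence of binary choices $(\varepsilon_k)_k$ is defined by an arithmetical formula.
\item The formula is arithmetical only once the whole sequence is defined at once. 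To avoid a circular definition, we first use arithmetical comprehension to form the set $T$ of finite binary strings $\sigma$ such that infinitely many $n$ have $x_n\in I_\sigma$.
\item We then define the choice sequence recursively in $T$ as its leftmost path, which exists by $\Delta^0_1$ recursion relative to $T$.
\item Finally we pick $n_0<n_1<\cdots$ primitive recursively in $T$ and $(x_n)$, with $x_{n_k}\in I_k$. This is possible because each $I_k$ contains infinitely many terms, so the search for $n_k$ terminates.
\end{enumerate}
The resulting subsequence satisfies $|x_{n_k}-x_{n_l}|\le (b-a)2^{-\min(k,l)}$. It therefore converges, with an explicit modulus.

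\emph{Bolzano-Weierstra\ss{} implies $\ACA_0$.} Let $g$ be one-to-one and set $c_n=\sum_{i<n}2^{-g(i)}$. This sequence is nondecreasing and bounded by $2$. Bolzano-Weierstra\ss{} gives a convergent subsequence with limit a real $c$. Since $(c_n)$ is monotone, the whole sequence converges to $c$, and $c_n\le c$ for all $n$. We then decide membership in the range of $g$ in three steps.
\begin{enumerate}
\item Given $j$, find the least $n$ with $c-c_n<2^{-j}$. We search for $n$ using a rational approximation $q$ of $c$ with $|q-c|<2^{-j-2}$, looking for $n$ with $q-c_n<2^{-j-1}$. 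This search is effective relative to the code of $c$ and terminates because $c_n\to c$.
\item For any $m\ge n$, a value $g(m)=j$ would force $c-c_n\ge 2^{-j}$, which is impossible. Since $g$ is injective, $j$ is in the range of $g$ if and only if $j=g(i)$ for some $i<n$.
\item This check is a bounded, hence $\Delta^0_1$, condition in the parameters $g$ and $c$. So $\Delta^0_1$ comprehension yields the range of $g$.
\end{enumerate}

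The main obstacle is the reverse direction. What needs checking there is that, in $\RCA_0$, ``convergent subsequence'' delivers a limit coded as a real with a fast Cauchy sequence. The whole argument depends on this, since it is what turns the search for $n$ into a total computable procedure relative to the code of $c$. If the formal statement of Bolzano-Weierstra\ss{} only asserted convergence without a coded limit, we would instead apply it to the sequence itself and extract a modulus from the Cauchy condition. The monotonicity of $(c_n)$ is what makes this extraction possible.
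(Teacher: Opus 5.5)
Your proof is correct and follows the standard route. The paper gives no proof of this statement; it only cites \cite[III.2.2]{simpson2009subsystems}. Your reversal, via $c_n=\sum_{i<n}2^{-g(i)}$ and the range-of-injections characterisation of $\mathrm{ACA}_0$, is the usual argument. Your forward direction is the usual bisection argument made arithmetical.

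A few local repairs are needed.

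\textbf{Step 5 of the forward direction.} The condition $x_n\in I_k$ is $\Pi^0_1$, as you note yourself in step~2. So the search for $n_k$ is not effective in $T$ and $(x_n)$, and ``primitive recursively in $T$ and $(x_n)$'' is wrong as stated. Either fix works:
\begin{itemize}
\item Form $S=\{(\sigma,n): x_n\in I_\sigma\}$ by arithmetical comprehension alongside $T$, and search in $S$.
\item Search by dovetailing for $x_n$ in the open $2^{-k}$-enlargement of $I_k$. This is a $\Sigma^0_1$ condition, and it still gives $|x_{n_k}-x_{n_l}|\le(b-a+2)2^{-\min(k,l)}$.
\end{itemize}

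\textbf{Step 1 of the reversal.} The condition $c-c_n<2^{-j}$ is only $\Sigma^0_1$, so you do not compute the \emph{least} such $n$. The decidable search you then describe returns some $n$ with this property, which is all step~2 uses; just drop the word ``least''.

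\textbf{Your last paragraph.} Drop the fallback. It is unnecessary, since in this framework ``convergent'' means convergent to a coded real. It would also fail. Take a computable injective $g$ with noncomputable range. In the $\omega$-model of computable sets, your $(c_n)$ is monotone and Cauchy but has no modulus. A computable modulus would make $c$ computable, and then, by your own argument, the range of $g$ would be computable. So monotonicity does not let you extract a modulus from a bare Cauchy condition.
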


%
\section{Populating the Big Five}

\subsection{Cauchy-Lipschitz in RCA\({}_0\)}

The hypothesis of \cref{CLip} is a stronger version of the usual hypothesis for the Cauchy-Lipschitz theorem, also known as the Picard-Lindel\"of theorem. In the usual version, the function \(f\) is required to be continuous and Lipschitz only in its second variable. Since all the proofs (to our knowledge) are based on the claim ``every continuous function over a compact is uniformly continuous''---\cref{WKLassertions}, assertion (2)---in order to find \(M\) as in \cref{magic M} in \cref{coroWKL}, it seems that the ``standard'' hypothesis forces to resort to compactness arguments. The stronger hypothesis fixes this situation since \(f\) being Lipschitz in its two variables amounts to have a bound over time and (not only) space, which immediately yields the aforementioned \(M\). The rest of the proof can be completely handled within RCA\({}_0\), \cite[Theorem IV.8.3]{simpson2009subsystems}{---similar to our version, but observing that Simpson's result considers \(M\) as an input.} It can also be stated as a corollary of the Cauchy-Osgood result, since the only WKL\({}_0\)-step is avoided by the Lipschitz condition. This amounts to say.

\begin{theorem}\label{CLRCA}
\cref{CLip} is provable in RCA\({}_0\).
\end{theorem}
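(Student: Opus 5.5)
The plan is to show, inside $\RCA_0$, that a rational bound $M$ on $\norm{f}$ over $\mathcal{R}$ can be produced without any compactness argument. Once we have it, we reduce to the version of Picard--Lindel\"of that Simpson proves in $\RCA_0$ with $M$ given as an input \cite[Theorem IV.8.3]{simpson2009subsystems}. Alternatively, we can run the Cauchy-Osgood argument with $\omega(r)=Lr$ and observe that its only $\WKL_0$ step, \cref{magic M} of \cref{coroWKL}, becomes superfluous.

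\textbf{Step 1: the bound $M$.} Fix rational endpoints so that $\mathcal{R}=[t_0-a,t_0+a]\times\overline{B}(y_0,b)$, and let $L$ be a rational Lipschitz constant for $f$ in both variables, so that $\norm{f(t,y)-f(s,z)}\le L(|t-s|+\norm{y-z})$. The code of the continuous function $f$ lets us compute, in $\RCA_0$, a rational $q$ with $\norm{f(t_0,y_0)-q}\le 1$. For every $(t,y)\in\mathcal{R}$ we then get $\norm{f(t,y)}\le \norm{q}+1+L(a+b)$, and we take $M$ to be a rational upper bound of this quantity. This uses only a single evaluation and arithmetic, so no appeal to \cref{WKLassertions} is needed.

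\textbf{Step 2: existence.} Set $h=\min(a,b/M)$, shrunk further if needed so that $Lh<1$. Define Picard iterates $y_0(t)\equiv y_0$ and $y_{k+1}(t)=y_0+\int_{t_0}^t f(s,y_k(s))\,ds$ on $[t_0,t_0+h]$. Each iterate must be given as a code for a continuous function computable from the previous one. Integration of continuous functions with an explicit modulus of uniform continuity is available in $\RCA_0$. Here the modulus is explicit: $\delta(\varepsilon)=\varepsilon/(L(1+M))$, because each $y_k$ is $M$-Lipschitz and $f$ is $L$-Lipschitz. Consequently the whole sequence $(y_k)$ is $\Delta^0_1$-definable, and primitive recursion builds it in $\RCA_0$. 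By induction on $k$ (a $\Pi^0_1$, in fact arithmetic bounded, statement, handled with rational approximations) we show:
\begin{itemize}
\item $\norm{y_k(t)-y_0}\le b$;
\item $\sup_t\norm{y_{k+1}(t)-y_k(t)}\le M h\,(Lh)^k$.
\end{itemize}
The sequence therefore converges uniformly with an explicit rate. A uniform limit with an explicit modulus of convergence exists in $\RCA_0$, so the limit $y$ is a continuous function. Passing to the limit in the integral equation, and applying the fundamental theorem of calculus for integrals with a modulus, shows that $y'=f(t,y)$ and $y(t_0)=y_0$.

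\textbf{Step 3: uniqueness.} Let $z$ be another solution on $[t_0,t_0+\epsilon]$. Put $\epsilon'=\min(\epsilon,h)$, and note that $z$ stays in $\mathcal{R}$ there. Then $\norm{y(t)-z(t)}\le L\int_{t_0}^t\norm{y-z}$. Iterating this inequality, or using $Lh<1$ to get a contraction estimate $\sup\norm{y-z}\le Lh\sup\norm{y-z}$, forces $y=z$. The supremum is not guaranteed to exist in $\RCA_0$, so we will argue instead with a rational $K$ such that $\norm{y-z}\le K$ pointwise; such a $K$ follows from the bound $M$ and $\norm{z'}\le M$. This gives $\norm{y-z}\le (Lh)^kK$ for all $k$, hence $y=z$. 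Repeating the argument patch by patch then extends uniqueness to the whole common interval.

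\textbf{Main obstacle.} The delicate part is bookkeeping rather than analysis. We must check that every object is coded uniformly in $\RCA_0$: the moduli of continuity of the iterates, the integral, and the uniform limit. We must also check that no step implicitly takes a supremum over a compact set. In particular, the uniqueness argument must be phrased with the explicit bound $K$ rather than $\sup\norm{y-z}$. I would resolve both points by following Simpson's Theorem IV.8.3 verbatim, feeding it the $M$ obtained in Step 1.
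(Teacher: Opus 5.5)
Your proposal is correct and follows essentially the paper's route. The Lipschitz condition in both variables yields the bound $M$ from a single evaluation of $f$ at $(t_0,y_0)$, after which the paper either invokes Simpson's Theorem IV.8.3 (which takes $M$ as input) or specialises the Cauchy--Osgood proof to $\omega(r)=Lr$, exactly the two options you name. Your explicit Picard iteration, and your uniqueness argument via a pointwise bound $K$ rather than a supremum or first-exit time, usefully spell out what the paper's sketch leaves implicit: the two-variable Lipschitz hypothesis also supplies the explicit modulus $\varepsilon/(L(1+M))$ for the integrands $s\mapsto f(s,y_k(s))$, which is what lets the integrals exist in $\RCA_0$ without any compactness argument.
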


\subsection{Cauchy-Osgood in WKL\({}_0\)}

The usual ``Osgood theorem'' is in fact a uniqueness theorem; the statement does not consider \(f\) to be continuous and the existence of a solution is not guaranteed. Usual proofs compare two solutions in a similar fashion as ours (in the uniqueness part), but our endeavours require a more subtle treatment of each step. However, and to the best of our knowledge, there is no proof of existence under Osgood conditions (plus continuity) in the literature. A quick proof could simply invoke Cauchy-Peano's theorem, in WKL\({}_0\), but we found ourselves in the need to go extra lengths to reduce the calls of a WKL\({}_0\) to just one. This (logical) simplification of the argument allows us then to have \cref{CLRCA} as a corollary almost directly.

The hypothesis in \cref{COsgood} can be seen as a generalization of the Lipschitz conditions. Instead of requiring a linear modulus, a simply continuous and non-decreasing modulus is required, plus a blow-up condition that is clearly satisfied by a Lipschitz function. This tightening and the lack of control over the time variable force the need for a compactness argument, which increases the logical power required.

\begin{theoremrep}
\cref{COsgood} is equivalent to WKL\({}_0\) over RCA\({}_0\).
\end{theoremrep}
\begin{proof}

\((\Rightarrow)\)We first show that WKL\({}_0\) implies \cref{COsgood}.\newline
\textbf{Existence:}   By \cref{coroWKL}(i), (this is a  WKL\({}_0\) statement) 
there exists \(M>0\) such that \(\lVert{f(t,y)-f(t',y')}\rVert\leq M\) on \(\mathcal{R}\).  We reason in RCA\({}_0\).

Let \(r>0\) be such that \(\overline{\mathcal{B}(y_0,r)}\subseteq\mathcal{D}\). Choose \(t^*>0\) such that \(I:=[t_0,t_0+t^*]\subseteq\mathcal{I}\) and \(Mt^*\leq r\). For \(k\in\mathbb{N}\), let \(t_i=t_0+it^*/k\), for \(0\leq i\leq k\), and for \(t\in[t_i,t_{i+1}]\) define \(y_k:I\to\mathbb{R}^n\) by
\[y_k(t):=y_k(t_i)+(t-t_i)f(t_i,y_k(t_i)),\text{ with }y_k(t_0)=y_0.\]
In this way, \(y_k:I\to\mathbb{R}^n\) is a piece-wise map with slope sampling taken in the left endpoint of \([t_i,t_{i+1}]\) for each \(0\leq i\leq k\). We now show that the sequence \((y_k)_k\) is a uniform Cauchy sequence, that is, \(\sup_{t\in I}\norm{y_k(t)-y_l(t)}\to0\) as \(k,l\to\infty\).

For \(t\in[t_i,t_{i+1}]\), we have
\begin{align*}
\norm{y_k(t)-y_0}&\leq\sum_{j<i}\norm{y_k(t_{j+1})-y_k(t_j)}+\norm{y_k(t)-y_k(t_i)}\\
&\leq M\left(i\frac{t^*}{k}+(t-t_i)\right)=(t-t_0)M \leq Mt^*\leq r,
\end{align*}
so \(y_k(I)\subseteq\overline{\mathcal{B}(y_0,r)}\).

Let \(f_k(t):=f(t_i,y_k(t_i))\) on \([t_i,t_{i+1}]\). Then
\[y_k(t)=y_0+\int_{t_0}^tf_k(s)ds.\]

Fix \(t\in I\) and, for each \(k\), let \(i=i(t,k)\) be the unique index with \(t\in[t_i,t_{i+1}]\). Set \(F(t,k)=\lVert f(t,y_k(t))-f_k(t)\rVert\). By uniform continuity of \(f\) on \(\mathcal{R}\), we have
\begin{align*}
F(t,k)&\leq\omega\left(\lVert y_k(t)-y_k(t_i)\rVert\right)+\epsilon_k 
\leq\omega\left(\frac{Mt^*}{k}\right)+\epsilon_k,
\end{align*}
where \(\epsilon_k=\sup_{|t-t_i|\leq t^*/k}\lVert f(t,y_k(t_i))-f(t_i,y_k(t_i))\rVert\to0\) as the mesh size \(t^*/k\) goes to \(0\). Hence
\[\norm{f(t,y_k(t))-f_k(t)}\to0\text{ as }k\to\infty.\]

Let \(k,l\) be two distinct positive integers, then
\[y_k(t)-y_l(t)=\int_{t_0}^tf_k(s)-f_l(s)ds.\]

Set \(S(t)=\sup_{t'\leq t}\lVert y_k(t')-y_l(t')\rVert\). Write \(t_j^{(k)}=t_0+jt^*/k\) and \(t_j^{(l)}=t_0+jt^*/l\) for the meshes underlying \(y_k\) and \(y_l\). For \(s\in I\), denoting by \(t_{i(s,k)}^{(k)}\) (resp.\ \(t_{i(s,l)}^{(l)}\)) the left endpoint of the subinterval of \(y_k\) (resp.\ \(y_l\)) containing \(s\), we have
\[\lVert f_k(s)-f_l(s)\rVert\leq\epsilon_k+\epsilon_l+\omega(Mt^*/k)+\omega(Mt^*/l)+\omega(S(s)),\]
where \(E_{k,l}=\epsilon_k+\omega(Mt^*/k)+\epsilon_l+\omega(Mt^*/l)\to0\) as \(k,l\to\infty\). Thus
\[S(t)\leq(t-t_0)E_{k,l}+\int_{t_0}^t\omega(S(s))ds.\]
Let \((\tau_j)_{0\leq j\leq N}\) be the common refinement of \((t_j^{(k)})\) and \((t_j^{(l)})\), so that on each \([\tau_j,\tau_{j+1}]\) both \(f_k\) and \(f_l\) are constant. Set \(S_j:=S(\tau_j)\). Using the monotonicity of \(S\) we obtain
\[S_{j+1}-S_j\leq(\tau_{j+1}-\tau_j)\bigl(E_{k,l}+\omega(S_j)\bigr).\]
For \(k,l\) large enough so that \(E_{k,l}\) is negligible compared with \(\omega(S_j)\),
\[\frac{S_{j+1}-S_j}{\omega(S_j)}\leq 2(\tau_{j+1}-\tau_j).\]

Integrating and summing over \(j\) we obtain,
\[\int_0^{S(t)}\frac{dr}{\omega(r)}\leq2(t-t_0).\]
This combined with the Osgood condition forces \(S(t)=0\). Therefore \((y_k)\) is uniformly Cauchy. Since \(C(I,\mathbb{R}^n)\) is complete, \(y_k\to y\). Passing to the limit in
\[y_k(t)=y_0+\int_{t_0}^tf_k(s)ds,\text{ yields }y(t)=y_0+\int_{t_0}^tf(s,y(s))ds,\]
So \(y\) solves the IVP.

\textbf{Uniqueness:} 
We argue in RCA\({}_0\). Let \(y,z\) be two solution and set \(u(t)=\lVert y(t)-z(t)\rVert\). Then
\[\begin{aligned}
|u'(t)| &\leq \lVert f(t,y(t))-f(t,z(t))\rVert\\
&\leq \omega(u(t)).
\end{aligned}\]
Assume \(u(t_1)>0\) for some \(t_1>t_0\). By \cref{tetoile} in \cref{coroWKL}, there exists \(t^*\in(t_0,t_1)\) such that \(u=0\) on \((t_0,t^*)\) and \(u>0\) on \((t^*,t_1)\). Then for \(t\in(t^*,t_1)\),
\[\frac{u'(t)}{\omega(u(t))}\leq1
\Rightarrow\int_0^{u(t_1)}\frac{dr}{\omega(r)}\leq t_1-t^*<\infty,\]
contradicting the Osgood condition. Hence, \(u\equiv0\) and the solution is unique.

\((\Leftarrow)\) This direction follows immediately from \cref{WKLequiv}, since existence in Cauchy-Osgood implies \cref{WKLexists}.
\end{proof}

\subsection{Cauchy-Peano (maximal) in ACA\({}_0\)}

The Cauchy-Peano existence theorem states, typically, that if \(f\) is continuous, then there exists a local solution. This solution lives in an interval \(J\subseteq I\) determined by \(\sup_\mathcal{R}\norm{f(t,y)}\). As shown by \cite[Theorem IV.8.1]{simpson2009subsystems}, this theorem is provable in WKL\({}_0\) and by \cref{WKLequiv}, we obtain the missing direction to conclude they are equivalent. We present a maximal version, in the sense that the solution \(y:J^*\to\mathbb{R}^d\) is a maximal solution, that is, any extension of \(y\) beyond \(J^*\) is no longer a solution, so the interval where the solution is defined is as big as it gets.

\paragraph*{A brief note on Simpson's proof:} A classical maths proof of the Cauchy-Peano theorem involves the Arzel\`a-Ascoli theorem (Simpson calls it the Ascoli lemma): every bounded equicontinuous sequence of functions has a uniformly convergent subsequence. Over RCA\({}_0\), Ascoli lemma is equivalent to ACA\({}_0\) and Cauchy-Peano is equivalent to WKL\({}_0\)---both results can be found in \cite{Simpson84Peano}. The straightforward guess is that the WKL\({}_0\) proof requires more work that the ACA\({}_0\) proof, and it sure does. In Simpson's words:
\begin{center}
\textit{``...WKL\({}_0\) gives a good theory of continuity, while ACA\({}_0\) gives in addition a good theory of sequential convergence.''}
\end{center}
The sequential argument of Ascoli lemma must be replaced, in the case of Simpson's proof, by a Heine/Borel style covering argument.

\FINAL{
\todo[inline]{Can we argue that the proof of the max version cannot be done in WKL\({}_0\)?}
}

\begin{theoremrep}
\cref{CPeano} is provable in ACA\({}_0\) over RCA\({}_0\).
\end{theoremrep}

\begin{proof}
For the existence of a solution, we invoke the usual Cauchy-Peano statement, that is, if \(f\) is continuous, then there exist \(\epsilon>0\) and a (local) solution \(\phi:[t_0,t_0+\epsilon]\to\mathbb{R}^d\) to the IVP. This argument is equivalent to WKL\({}_0\), (\cite{Simpson84Peano}), so each call for a local solution requires WKL\({}_0\).

We reason over RCA\({}_0\) and we indicate where more logical strength is required. Define
\[T:=\{t>t_0\mid\text{ there exists a solution in }[t_0,t]\}\text{  and  }\beta=\sup_I T.\]
These definitions require WKL\({}_0\). Observe that \(T\neq\emptyset\) since \(t_0+\epsilon\in T\). Take an increasing rational sequence \((q_n)_n\) in \([t_0,\beta]\) such that \(q_n\to\beta\). For each \(n\), choose a solution \(\phi_n:[t_0,q_n]\to\mathbb{R}^d\). All of them satisfy
\[\phi_n(t)=y_0+\int_{t_0}^tf(s,\phi_n(s))ds.\]
Hence,
\[\norm{\phi_n(t)-\phi_m(t)}\leq M(t-t_0),\]
with \[M\geq\norm{f}_\infty,\] which exists because \(f\) is continuous over the compact domain \(\mathcal{R}\)---the existence of such an \(M\) requires WKL\({}_0\). This implies that all of the \(\phi_n\)s are uniformly bounded, uniformly Lipschitz, and share a common modulus of uniform continuity.

Fix \(b<\beta\) and enumerate a dense sequence \((t_i)_i\) in \([t_0,b]\). Since \((\phi_n(t_1))_n\) is a bounded sequence in \(\mathbb{R}^d\), one application of the Bolzano-Weierstra{\ss} theorem yields a convergent subsequence, that is, \(k_1:\mathbb{N}\to\mathbb{N}\) such that \(\phi_{k_1(n)}(t_1)\) converges. Write\(\phi^{(1)}_n=\phi_{k_1(n)}\) and observe that \((\phi^{(1)}_n(t_2))_n\) is a bounded sequence of \(\mathbb{R}^d\), so, another call to Bolzano-Weierstra{\ss} theorem gives us \(k_2:\mathbb{N}\to\mathbb{N}\) such that \(\phi^{(1)}_{k_2(n)}(t_2)\) converges. Write \(\phi^{(2)}_n=\phi^{(1)}_{k_2(n)}\) and observe that the sequence \((\phi^{(2)}_n)_n\) coverges at \(t_1\) and at \(t_2\). Repeating this, recursively, at stage \(m\) we have \(\phi^{(m-1)}_n\) converges at \(t_1,\ldots,t_{m-1}\) and we can extract a convergent subsequence \((\phi^{(m)}_n)_n\) that converges at \(t_m\). Over RCA\({}_0\), the Bolzano-Weierstra{\ss} theorem is equivalent to ACA\({}_0\), so this entire recursive argument requires ACA\({}_0\).

Define the diagonal sequence \(y_n:=\phi^{(n)}_n\) and observe that \(y_n(t_m)\) converges for all \(m\). Since we have uniform Lipschitz bounds, we can extend the convergence from the dense set \(\{t_i\mid i\}\) to all \(t\in[t_0,b]\). Hence, \(y_n\) converges uniformly on \([t_0,b]\), and since limits of solutions are solutions (Fundamental Theorem of Calculus for continuous functions, available in WKL\({}_0\), see \cite[Theorem IV.2.7]{simpson2009subsystems} ) we have that the limit function is a solution over \([t_0,b]\). Since \(b<\beta\) was arbitrary, we conclude that there exists a solution \(y:[t_0,\beta)\to\mathbb{R}^d\) to the IVP.

Since the slope of the solution is bounded the solution stays bounded as \(t\to\beta\). If \(\beta\) is finite, continuity gives a limit point at \(\beta\), so the Cauchy-Peano (local) existence theorem allows us to extend a solution beyond \(\beta\), which is a contradiction. Hence \(\beta=\sup I\).
\end{proof}

\subsection{Cauchy-Bournez-Gozzi from ATR\({}_0\) to \(\Pi_1^1\)-CA\({}_0\)}\label{sec:populating}

We now turn from existence and uniqueness to definability: the logical strength required to name a mathematical object. \(\ACA_0\) already provides comprehension for arithmetical formulas and covers \cref{CLip,COsgood,CPeano}. Beyond it, \(\mathrm{ATR}_0\) permits definitions by transfinite recursion along well-orders, and \(\Pi^1_1\)-CA\({}_0\) extends comprehension to analytic formulas. \cref{CBG1,CBG2} inhabit these two upper levels.

For convenience of our exposition, we start from the end. Let \(f:\mathcal{R}\to\mathbb{R}^d\) be a Baire-one function, that is, \(f\) is the point-wise limit of continuous functions. For a closed set \(K\subseteq\mathcal{R}\), denote by \(D_K\) the set of points at which \(f|_K\) is discontinuous, where \(f|_{K}\) is the restriction of \(f\) to \(K\). Consider the following transfinite recursion:
\[\mathcal{R}^{(0)}=\mathcal{R},\quad\mathcal{R}^{(\alpha+1)}=D_{\mathcal{R}^{(\alpha)}},\quad\text{and}\quad\mathcal{R}^{(\lambda)}=\cap_{\beta<\lambda}\mathcal{R}^{(\beta)},\]
where \(\alpha\) is a non-limit ordinal and \(\lambda\) is a limit ordinal.
\begin{definition}[Adapted from \cite{StacsBournezGozzi2024}]\label{solv}
A Baire-one function \(f:\mathcal{R}\to\mathbb{R}^d\) is called \textbf{solvable} if there exists a countable ordinal \(\alpha\) such that \(\mathcal{R}^{(\alpha-1)}\neq\emptyset\) and \(\mathcal{R}^{(\alpha)}=\emptyset\). If the countable ordinal \(\alpha\) is known (beforehand), then \(f\) is called \textbf{\(\alpha\)-solvable}.
\end{definition}

For an \(\alpha\)-solvable function, we consider the ordinal \(\alpha\) as part of the date given by the function. This  tells us the  level  at which there are no discontinuity points left. This extra piece of information is precisely what separates ATR\({}_0\) from \(\Pi^{1}_{1}\)-CA\({}_0\)---bounded vs unbounded transfinite iteration of ACA\({}_0\) statements.

\begin{theorem}\label{CBalpha}
\cref{CBG1} is provable in ATR\({}_0\).
\end{theorem}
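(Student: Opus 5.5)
We want to show that, working in $\mathrm{ATR}_0$, the unique solution of the IVP for an $\alpha$-solvable Baire-one $f$ can be named as a set. We follow the Bournez--Gozzi construction stage by stage and check the comprehension each stage needs. Everything is fixed as a code: $f$ as the pointwise limit of a coded sequence $(f_n)_n$ of continuous functions, closed sets $K\subseteq\mathcal{R}$ by codes of their complements, and the ordinal $\alpha$ by a countable well-ordering $W$, which is part of the data of an $\alpha$-solvable $f$ by \cref{solv}. The goal is to realise the sequence $(\mathcal{R}^{(\beta)})_{\beta\le\alpha}$ as a single hierarchy of sets along $W$, and then to read off the solution from it.

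\textbf{Step 1: one derivative step is arithmetical.} Given a code of a closed set $K$ and the code $(f_n)_n$, we show that a code of $D_K$, the set of discontinuity points of $f|_K$, is arithmetical in these parameters. We use the oscillation characterisation: $x\in D_K$ iff there is a rational $\varepsilon>0$ such that every rational ball around $x$ contains points $u,v\in K$ with $\|f(u)-f(v)\|\ge\varepsilon$. For Baire-one $f$, the values $f(u)$ are limits of $f_n(u)$, so the condition becomes an arithmetical formula in the codes. The set of rational balls witnessing continuity (oscillation $<\varepsilon$) is then arithmetically definable. This gives the complement of $D_K$ as an open set, together with the fact that $D_K$ is closed, which is Baire's classical fact that oscillation sets are closed.

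\textbf{Step 2: iterate with arithmetical transfinite recursion.} The map $\Theta(X)=\text{code of }D_{K_X}$ is given by an arithmetical formula. Limit stages are intersections, so the code of $\mathcal{R}^{(\lambda)}$ is the union of the earlier open-complement codes, which is again arithmetical in the hierarchy below $\lambda$. Hence a single application of arithmetical transfinite recursion along $W$, the defining axiom of $\mathrm{ATR}_0$, produces the whole hierarchy $H=(\mathcal{R}^{(\beta)})_{\beta\le_W\alpha}$. Solvability then guarantees $\mathcal{R}^{(\alpha)}=\emptyset$, so every point of $\mathcal{R}$ is assigned the first stage $\beta$ at which it leaves the hierarchy. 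On the relatively open set $\mathcal{R}^{(\beta)}\setminus\mathcal{R}^{(\beta+1)}$, the restriction $f|_{\mathcal{R}^{(\beta)}}$ is continuous.

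\textbf{Step 3: build the solution along the same well-order.} Following Bournez--Gozzi, the solution is obtained by a transfinite procedure. At successor stages we solve with the continuous restriction of $f$ on the current layer, using the \cref{CPeano}/\cref{COsgood}-type arguments, which are available in $\mathrm{ACA}_0\subseteq\mathrm{ATR}_0$. At limit stages we glue the pieces or pass to limits. Since each step is arithmetical in the previous data plus $H$ (a Bolzano--Weierstra\ss{} extraction being arithmetical by \cref{BWthm}), a second arithmetical transfinite recursion along $W$ yields a code for the approximations. The hypothesis that the solution is unique removes any choice: whatever subsequence is extracted converges to the same function. So the limit object is defined by a formula, and the arithmetical comprehension available in $\mathrm{ATR}_0$ gives it as a set, which establishes definability.

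\textbf{Main obstacle.} I expect the hardest part to be checking that each stage of Step 3 is uniformly arithmetical, with no hidden $\Pi^1_1$ quantifier. Two places are suspect. First, ``the solution on this layer'' quantifies over functions. Second, the limit-stage gluing must not require knowing in advance which piece of the trajectory lies in which layer. My first attempt would exploit uniqueness to replace ``the unique solution'' by an explicit arithmetical limit of Euler-type polygons, as in the proof of \cref{COsgood}, computed relative to $H$. If that fails, I would instead use $\Sigma^1_1$-choice or $\Delta^1_1$-comprehension, both of which are provable in $\mathrm{ATR}_0$. Uniqueness makes the defining formula simultaneously $\Sigma^1_1$ and $\Pi^1_1$, so $\Delta^1_1$-comprehension suffices.
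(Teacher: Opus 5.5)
Your overall strategy matches the paper's. You iterate an $\mathrm{ACA}_0$-level Peano step along the given well-order of length $\alpha$, glue the pieces, and let $\mathrm{ATR}_0$ supply the $\alpha$-long iteration. The paper does this on the time axis: at level $\gamma$ it projects $\mathcal{R}^{(\gamma)}\setminus\mathcal{R}^{(\gamma+1)}$ onto $I$, splits the projection into countably many disjoint intervals, runs maximal Cauchy--Peano on each, and glues the resulting $y_\gamma$.

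Your Steps 1--2 build the hierarchy $(\mathcal{R}^{(\beta)})_{\beta\le_W\alpha}$ itself by one arithmetical transfinite recursion. That is a sensible addition the paper does not make, but Step 1 is wrong as written, in two ways.
\begin{itemize}
\item \emph{$D_K$ is not closed.} Only the sets $\{x:\mathrm{osc}_{f|_K}(x)\ge\varepsilon\}$ are closed. $D_K$ is their union over $\varepsilon$, an $F_\sigma$ set whose complement (the continuity points) is $G_\delta$, not open. For example, $f(t,y)=$ Thomae's function of $t$ is upper semicontinuous, hence Baire-one, and $D_{\mathcal{R}}=(\mathbb{Q}\cap I)\times\mathcal{D}$. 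A recursion on closed-set codes has to iterate $K\mapsto\overline{D_K}$.
\item \emph{The condition is not arithmetical for the reason you give.} Writing $f(u)=\lim_n f_n(u)$ is not enough: ``there are $u,v\in K\cap B$'' quantifies over reals. Since $f$ is discontinuous, and $K$ may contain no rational points, you cannot pass to a countable dense set. What works is compactness. We have $\|f(u)-f(v)\|>\varepsilon$ iff, for some rational $q>\varepsilon$ and some $N$, $\|f_n(u)-f_n(v)\|\ge q$ for all $n\ge N$. For fixed $q,N$ these pairs form a closed subset of the compact set $(K\cap\overline{B'})^2$, with $B'\subseteq B$ a closed rational ball. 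Its nonemptiness is arithmetical by Heine--Borel.
\end{itemize}
The same device gives an arithmetical complement code for $\overline{D_K}$. With that repair Steps 1--2 are fine.

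The genuine gap is Step 3, which carries the whole content of the theorem and which you leave as a plan. ``Solve with the continuous restriction of $f$ on the current layer'' is not an instance of \cref{CPeano} or \cref{COsgood}, which concern a field continuous on a rectangle.
\begin{itemize}
\item $f|_{\mathcal{R}^{(\beta)}}$ is continuous only on a relatively open piece of a closed set.
\item A Peano solution for a continuous extension need not stay in $\mathcal{R}^{(\beta)}$.
\item The true trajectory can leave a layer into lower ones.
\end{itemize}
So you must say what stage $\beta$ of your second recursion outputs, and show it is a single arithmetical operator in $H$ and the earlier stages. Uniqueness does not remove the choices for free either. It is assumed for the original IVP, and it transfers to the auxiliary problems only once you know their solutions glue into solutions of the original one.

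Neither remedy you propose closes this.
\begin{itemize}
\item \emph{Euler polygons.} In the proof of \cref{COsgood} they converge because $f$ is uniformly continuous (the terms $\epsilon_k$), which is exactly what is missing here.
\item \emph{The $\Delta^1_1$ fallback.} The complexity count is wrong. For discontinuous $f$, $\mathrm{Sol}(y)$, i.e.\ $y'(t)=f(t,y(t))$ for every real $t$, has no evident reduction to rational times or to an integral identity, so as it stands it is $\Pi^1_1$. Then $\exists y\,(\mathrm{Sol}(y)\wedge y(t)>q)$ is $\Sigma^1_2$ and $\forall y\,(\mathrm{Sol}(y)\to y(t)>q)$ is $\Pi^1_2$. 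Uniqueness therefore yields only a $\Delta^1_2$ definition, which neither $\Sigma^1_1$-choice nor $\Delta^1_1$-comprehension handles. If instead existence of the solution is read inside the model, the solution is already a set and nothing is proved.
\end{itemize}
What you need is an explicit, uniformly arithmetical stage operation along $W$. In the paper, that role is meant to be played by the per-level decomposition of the time axis into intervals, followed by maximal Cauchy--Peano on each.
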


\begin{proof}[Sketch of the proof]

We begin by observing that the set \(D_\mathcal{R}=\mathcal{R}^{(0)}\) is a countable union of nowhere dense sets (\cite[Lemma 2]{StacsBournezGozzi2024}). This is a RCA\({}_0\) as its proof relies on the Baire Category theorem that it is known to be provable in RCA\({}_0\).

A cardinality argument shows that the projection \(J^*_0\) of \(\mathcal{R}^{(0)}\setminus\mathcal{R}^{(1)}\) over \(\mathcal{I}\) is, at most, the union of \(\beta\) disjoint intervals, for some countable ordinal \(\beta<\omega_1\). Let \(J\) be one such interval and observe that \(f|_J\) is continuous. A single call to \cref{CPeano} yields a maximal solution in \(J\). Proceeding in the same way for each of the \(\beta\) intervals we obtain---by taking unions---a solution \(y_0:J^*_0\to\mathbb{R}^d\). In general, for \(\gamma\leq\alpha\) we consider the projection \(J^*_\gamma\) of \(\mathcal{R}^{(\gamma)}\setminus\mathcal{R}^{(\gamma+1)}\) over \(\mathcal{I}\) and proceed in the same fashion over each of the related disjoint intervals to obtain a solution \(y_\gamma:J^*_\gamma\to\mathbb{R}^d\). Since \(f\) is \(\alpha\)-solvable, \(\mathcal{R}^{(\alpha)}=\emptyset\) and hence \(\mathcal{R}^{(0)}\setminus\mathcal{R}^{(\alpha)}=\mathcal{R}\). We conclude by defining a solution \(y\) over \(\cup_{\eta\leq\alpha-1}J^*_\eta=\mathcal{I}\) by \(y|_{J^*_\eta}=y_\eta\) for each \(\eta\leq\alpha-1\).

Since \cref{CPeano} is an ACA\({}_0\) statement, we are using an \(\alpha\)-deep transfinite recursion, that is, we require ATR\({}_0\) to proceed as we did.
\end{proof}

For solvable functions where the ordinal \(\alpha\) is not known, we have the following.

\begin{theorem}
\cref{CBG2} is provable in \(\Pi_1^1\)-CA\({}_0\).
\end{theorem}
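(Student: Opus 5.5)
The plan is to reduce \cref{CBG2} to the bounded case, \cref{CBalpha}, by using $\Pi^1_1$-comprehension to produce the missing ordinal bound. Throughout, I reason in $\Pi^1_1$-CA$_0$, which contains ATR$_0$, so the whole machinery of \cref{CBalpha} remains available once a suitable well-order is in hand.

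The first step is to define the ordinal. Fix codes for $f$ as a Baire-one function, given by a sequence of continuous approximants, and for the closed sets $\mathcal{R}^{(\gamma)}$ via their complements as countable unions of rational boxes. The statement ``the iterated discontinuity sets indexed by a well-order $W$ exist, and $\mathcal{R}^{(W)}=\emptyset$'' asserts the existence of a hierarchy along $W$ whose successor steps are arithmetical in the previous one, because $D_K$ is arithmetical in a code for $K$ and in $f$. Solvability says that some countable well-order $W$ witnesses this. The set of such (codes of) well-orders is $\Pi^1_1$ in the parameter $f$, or at worst $\Sigma^1_1\wedge\Pi^1_1$. The witness can be taken of a restricted form, for instance well-orders recursive in a fixed real obtained from $f$, so that the Cantor-Bendixson-type derivative stabilises at a countable stage.

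The second step, and the main obstacle, is to extract an actual code for $\alpha$ from the mere existence statement. I would first try the standard reverse-mathematics route through the Cantor-Bendixson theorem for closed sets, which is equivalent to $\Pi^1_1$-CA$_0$ by \cite[VI.1]{simpson2009subsystems}. Adapted to the discontinuity derivative $K\mapsto D_K$ (the analogue of taking the perfect kernel), the argument goes as follows. A point $p\in\mathcal{R}$ survives to all stages iff there is no well-ordered hierarchy removing it, and this is a $\Pi^1_1$ condition. Applying $\Pi^1_1$-comprehension yields the set of points, or rather rational boxes, removed at some stage, together with the least stage at which each is removed. Comparing these stages by comparability of well-orderings, which is available in ATR$_0$, lets me assemble them into a single well-order $\alpha$ as their supremum. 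Solvability then forces the kernel to be empty. The delicate points are uniformity of the codes and making sure the supremum is taken inside the model.

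The third step is to apply \cref{CBalpha} with this $\alpha$. The function $f$ is now $\alpha$-solvable with $\alpha$ given as data, so ATR$_0$ produces the piecewise solution $y$ on $\bigcup_{\eta\leq\alpha-1}J^*_\eta=\mathcal{I}$. Finally, uniqueness of the IVP's solution, assumed in \cref{CBG2}, guarantees that the object defined this way is \emph{the} solution, independent of the choices of maximal solutions made in each interval. It is therefore definable in $\Pi^1_1$-CA$_0$.
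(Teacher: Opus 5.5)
Your strategy matches the paper's. Its proof simply reruns the construction of \cref{CBalpha} and asserts that $\Pi^1_1$-$\mathrm{CA}_0$ supplies the transfinite recursion when no bound on $\alpha$ is known. Splitting this into ``first produce a code for $\alpha$, then invoke the $\mathrm{ATR}_0$ result'' is a sound way to make it precise.

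The gap is in Step~2, which you rightly call the crux. As written, ``no well-ordered hierarchy removes $p$'' reads $\forall W\,\forall H\,[(\mathrm{WO}(W)\wedge H\text{ is the hierarchy along }W)\rightarrow p\text{ survives in }H]$. Since $\mathrm{WO}(W)$ is $\Pi^1_1$ and sits in the antecedent, this formula is $\Pi^1_2$, not $\Pi^1_1$. Dually, ``removed at some stage'' is $\Sigma^1_2$. So $\Pi^1_1$-comprehension does not give you the set of removed boxes or their least stages, and the supremum cannot be assembled this way.

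Restricting the stages to well-orders recursive in a code $x$ of $f$, as you hint in Step~1, does fix the count. In $\mathrm{ATR}_0$ the hierarchy along a given well-order exists and is unique, so $\exists e\,[\{e\}^x \text{ is a well-order and the box is removed along } \{e\}^x]$ is $\Pi^1_1$. But then ``solvability forces the kernel to be empty'' is no longer automatic. You need a boundedness lemma: if the discontinuity derivative of $f$ ever reaches $\emptyset$, it does so at an $x$-recursive stage. Your phrase ``stabilises at a countable stage'' is not this, since countability comes for free; what is needed is a stage below $\omega_1^x$.

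Classically the lemma follows in two steps.
\begin{enumerate}
\item View $K\mapsto D_K$ as a monotone operator on sets of removed rational boxes. It is arithmetical in $x$, so by Spector's boundedness theorem for monotone inductive definitions it closes by stage $\omega_1^x$.
\item Compactness then finishes the argument. A decreasing well-ordered family of nonempty compact sets has nonempty intersection, so the first $\gamma$ with $\mathcal{R}^{(\gamma)}=\emptyset$ is a successor $\le\omega_1^x$, hence $<\omega_1^x$.
\end{enumerate}
Proving this inside $\Pi^1_1$-$\mathrm{CA}_0$ is the real content of the step. One route is via the hyperjump of $x$, which gives $\mathrm{WO}^x=\{e:\{e\}^x\text{ is a well-order}\}$ and hence the well-order $\sum_{e\in\mathrm{WO}^x}\{e\}^x$ bounding all $x$-recursive ones. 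Your proposal asserts this lemma without proof, and the paper's sketch does not address it either.

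Note also a point about how ``$f$ is solvable'' is formalized. If it is read internally as $\exists W\,(\mathrm{WO}(W)\wedge\mathcal{R}^{(W)}=\emptyset)$, Step~2 is superfluous: instantiate $W$ and apply \cref{CBalpha}, so $\mathrm{ATR}_0$ already suffices. Step~2 has content only when $\alpha$ must be recovered from $f$ alone, and that is exactly where the boundedness lemma cannot be avoided.
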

\begin{proof}[Sketch of the proof]
We proceed exactly as in the sketch of the proof of \cref{CBalpha} with the sole difference that \(\alpha\) is not known beforehand, the transfinite recursion can extend up to arbitrarily large countable ordinals. This is precisely what \(\Pi^1_1\)-CA\({}_0\) provides.
\end{proof}

This completes the population of the Big Five by natural ODE statements indexed by the regularity of \(f\), with ATR\({}_0\) marking a shift from existence to the vocabulary needed to \emph{name} solutions.

\section{Conclusion and Perspectives}
\label{sec:conclusion}
\bigskip

We have shown that every level of the Big Five is inhabited by a
natural ODE theorem, with exact equivalences between regularity
assumptions on the vector field~$f$ and logical principles:
Cauchy-Lipschitz at~$\RCA_0$, Cauchy-Osgood at~$\WKL_0$, maximal
Cauchy-Peano at~$\ACA_0$, and Bournez-Gozzi-type definability at~$\mathrm{ATR}_0$
and $\Pi^1_1\text{-}\mathrm{CA}_0$. The regularity of~$f$ emerges as an
intrinsic algorithmic invariant determining which logical principles
any solver must invoke.

\medskip
\paragraph{Intrinsic explanations for computer algebra.} The discussion in
Section~\ref{sec:brm-eff-odes} about analytic-continuation 
illustrates what ``intrinsic'' can mean in practice. Symbolic methods
solve polynomial ODEs over~$\mathbb{Q}[[X]]$ in quasi-linear time; yet
evaluating the real solution on~$[0,T]$ is inherently sequential in
the number of continuation patches, a quantity that depends on the
geometry of the solution rather than on any implementation choice. This seems intuitive, but,
can we prove it cannot be done better?

Precisely, the proof-theoretic classification makes this obstruction structural.
For each regularity class, a uniformly faster solver would entail the
violation of an established result of complexity or computability
theory: at the polynomial level, the expressibility of $E(x)=2^x$ as a
bounded recursion on notation; at the $\WKL_0$ level, computable
solutions for every computable continuous right-hand side,
contradicting~\cite{PR79,Abe71}; at higher levels, the avoidance of
comprehension principles (arithmetical, transfinite,
hyperarithmetical) whose necessity is part of the equivalences we
prove. 

Dually, supplying auxiliary intrinsic data, such as length bounds,
radii of convergence, moduli of continuity, places the problem at a
lower stratum and restores tractability. The contribution is thus not
to diagnose what symbolic solvers have missed, but to pin each barrier
down to the precise, proof-theoretically traceable assumption of
complexity or computability theory that any improvement would need to
overturn. 

\medskip
\paragraph{Reverse mathematics as a representation-invariant toolkit.}
Our broader programme is to exploit reverse mathematics as a common
language for discussing algorithmic difficulty across communities
whose native encodings are hard to compare directly. The five
equivalences in this paper are a proof of concept in the context
of initial value problems. The same methodology should apply wherever
representation choices obscure the underlying computational content:
partial and delay differential equations; differential-algebraic
systems as handled in symbolic computation; fixed-point and
equilibrium problems from descriptive complexity; and, through
bounded reverse mathematics, a finer stratification within the
efficient regime that would match the complexity classes actually
relevant to computer algebra. In each case, the goal is the same: to
trade comparisons of encodings for comparisons of logical content, and
thereby recover the structural reasons behind algorithmic difficulty.

\newpage

\bibliographystyle{plain}
\bibliography{reference-biblio.bib}

@article{JOC2007,
	_bib2doi_confirmed = {true},
	_bib2doi_finished = {true},
	_bib2doi_selected = {dblp:/rec/journals/jc/BournezCGH07.bib},
	author = {Bournez, Olivier and Campagnolo, Manuel Lameiras and Gra{\c{c}}a, Daniel Silva and Hainry, Emmanuel},
	bibsource = {dblp computer science bibliography, https://dblp.org},
	biburl = {https://dblp.org/rec/journals/jc/BournezCGH07.bib},
	date-added = {2026-01-31 15:40:39 +0100},
	date-modified = {2026-01-31 15:40:39 +0100},
	doi = {10.1016/j.jco.2006.12.005},
	journal = {Journal of Complexity},
	month = {June},
	number = {3},
	pages = {317--335},
	pdf = {https://www.lix.polytechnique.fr/\textasciitilde bournez/uploads/Main/JournalOfComplexity2006.pdf},
	timestamp = {Tue, 16 Feb 2021 00:00:00 +0100},
	title = {Polynomial differential equations compute all real computable functions on computable compact intervals},
	topics = {Analog},
	url = {http://dx.doi.org/10.1016/j.jco.2006.12.005},
	volume = {23},
	year = {2007},
	bdsk-url-1 = {http://dx.doi.org/10.1016/j.jco.2006.12.005}}

@incollection{CIEChapter2007,
	_bib2doi_finished = {true},
	address = {New York},
	author = {Bournez, Olivier and Campagnolo, Manuel L.},
	title = {A Survey on Continuous Time Computations},
	date-added = {2026-01-31 15:40:39 +0100},
	date-modified = {2026-01-31 15:40:39 +0100},
	editor = {Cooper, S.B. and L{\"o}we, B. and Sorbi, A.},
	pages = {383--423},
	pdf = {https://www.lix.polytechnique.fr/\textasciitilde bournez/uploads/Main/SurveyContinuousTime.pdf},
	publisher = {Springer-Verlag},
	booktitle = {New Computational Paradigms. Changing Conceptions of What is Computable},
	topics = {Analog, ModelsSurvey},
	year = {2008}}

@inproceedings{bournez2012complexity,
	_bib2doi_confirmed = {true},
	_bib2doi_finished = {true},
	_bib2doi_selected = {dblp:/rec/conf/issac/BournezGP12.bib},
	author = {Bournez, Olivier and Gra{\c{c}}a, Daniel Silva and Pouly, Amaury},
	bibsource = {dblp computer science bibliography, https://dblp.org},
	biburl = {https://dblp.org/rec/conf/issac/BournezGP12.bib},
	booktitle = {International Symposium on Symbolic and Algebraic Computation, ISSAC'12, Grenoble, France - July 22 - 25, 2012},
	date-added = {2026-01-31 15:40:39 +0100},
	date-modified = {2026-01-31 15:40:39 +0100},
	doi = {10.1145/2442829.2442849},
	editor = {van der Hoeven, Joris and van Hoeij, Mark},
	pages = {115--121},
	publisher = {{ACM}},
	timestamp = {Wed, 29 Jul 2020 01:00:00 +0200},
	title = {On the complexity of solving initial value problems},
	topics = {Analog,Dynamical,GPAC},
	url = {https://doi.org/10.1145/2442829.2442849},
	year = {2012},
	bdsk-url-1 = {https://doi.org/10.1145/2442829.2442849}}

@article{JournalACM2017,
	_bib2doi_confirmed = {true},
	_bib2doi_finished = {true},
	_bib2doi_selected = {dblp:/rec/journals/jacm/BournezGP17.bib},
	author = {Bournez, Olivier and Gra{\c{c}}a, Daniel Silva and Pouly, Amaury},
	bibsource = {dblp computer science bibliography, https://dblp.org},
	biburl = {https://dblp.org/rec/journals/jacm/BournezGP17.bib},
	date-added = {2026-01-31 15:40:39 +0100},
	date-modified = {2026-01-31 15:40:39 +0100},
	doi = {10.1145/3127496},
	journal = {Journal of the ACM},
	keywords = {Computer Science - Computational Complexity},
	number = {6},
	pages = {38:1--38:76},
	timestamp = {Tue, 07 May 2024 01:00:00 +0200},
	title = {Polynomial Time Corresponds to Solutions of Polynomial Ordinary Differential Equations of Polynomial Length},
	topics = {Analog,Dynamical,GPAC},
	volume = {64},
	year = {2017},
	bdsk-url-1 = {http://doi.acm.org/10.1145/3127496},
	bdsk-url-2 = {https://doi.org/10.1145/3127496}}

@incollection{bournez2021survey,
	_bib2doi_finished = {true},
	anciencode = {bournez2021survey},
	author = {Bournez, Olivier and Pouly, Amaury},
	booktitle = {Handbook of Computability and Complexity in Analysis},
	date-added = {2026-01-31 15:40:39 +0100},
	date-modified = {2026-01-31 15:40:39 +0100},
	doi = {10.1007/978-3-030-59234-9\_6},
	editor = {Brattka, Vasco and Hertling, Peter},
	publisher = {Springer},
	title = {A Survey on Analog Models of Computation},
	topics = {Analog,Dynamical,GPAC,ModelsSurvey},
	year = {2021},
	bdsk-url-1 = {https://doi.org/10.1007/978-3-030-59234-9%5C_6}}

@article{MFCSJournal,
	_bib2doi_confirmed = {true},
	_bib2doi_finished = {true},
	_bib2doi_selected = {dblp:/rec/journals/cc/BournezD23.bib},
	author = {Bournez, Olivier and Durand, Arnaud},
	bibsource = {dblp computer science bibliography, https://dblp.org},
	biburl = {https://dblp.org/rec/journals/cc/BournezD23.bib},
	date-added = {2026-01-31 15:40:39 +0100},
	date-modified = {2026-01-31 15:40:39 +0100},
	doi = {10.1007/s00037-023-00240-1},
	journal = {Computational Complexity},
	number = {2},
	pages = {7},
	publisher = {Springer},
	timestamp = {Tue, 07 May 2024 01:00:00 +0200},
	title = {A Characterization of Functions over the Integers Computable in Polynomial Time Using Discrete Ordinary Differential Equations},
	topics = {Analog,DiscreteODE},
	url = {https://doi.org/10.1007/s00037-023-00240-1},
	volume = {32},
	year = {2023},
	bdsk-url-1 = {https://doi.org/10.1007/s00037-023-00240-1}}

@inproceedings{StacsBournezGozzi2024,
	_bib2doi_confirmed = {true},
	_bib2doi_finished = {true},
	_bib2doi_selected = {dblp:/rec/conf/stacs/BournezG24.bib},
	author = {Bournez, Olivier and Gozzi, Riccardo},
	bibsource = {dblp computer science bibliography, https://dblp.org},
	biburl = {https://dblp.org/rec/conf/stacs/BournezG24.bib},
	booktitle = {Symposium on Theoretical Aspects of Computer Science (STACS), Clermont-Ferrand, France},
	date-added = {2026-01-31 15:40:39 +0100},
	date-modified = {2026-01-31 15:40:39 +0100},
	doi = {10.4230/LIPIcs.STACS.2024.20},
	timestamp = {Sun, 06 Oct 2024 01:00:00 +0200},
	title = {Solving discontinuous initial value problems with unique solutions is equivalent to computing over the transfinite},
	topics = {Analog,Dynamical},
	year = {2024},
	bdsk-url-1 = {https://doi.org/10.4230/LIPIcs.STACS.2024.20}}

@techreport{gozzi2024STACSJournal,
	_bib2doi_confirmed = {true},
	_bib2doi_finished = {true},
	_bib2doi_selected = {dblp:/rec/journals/corr/abs-2405-00165.bib},
	author = {Bournez, Olivier and Gozzi, Riccardo},
	bibsource = {dblp computer science bibliography, https://dblp.org},
	biburl = {https://dblp.org/rec/journals/corr/abs-2405-00165.bib},
	date-added = {2026-01-31 15:40:39 +0100},
	date-modified = {2026-01-31 15:40:39 +0100},
	doi = {10.48550/arXiv.2405.00165},
	eprint = {2405.00165},
	eprinttype = {arXiv},
	journal = {CoRR},
	timestamp = {Sun, 09 Jun 2024 01:00:00 +0200},
	title = {Solvable Initial Value Problems Ruled by Discontinuous Ordinary Differential Equations},
	topics = {Analog,Dynamical},
	url = {https://doi.org/10.48550/arXiv.2405.00165},
	volume = {abs/2405.00165},
	year = {2024},
	institution = {\'Ecole Polytechnique, France},
	bdsk-url-1 = {https://doi.org/10.48550/arXiv.2405.00165}}

@inproceedings{Icalp2024,
	_bib2doi_confirmed = {true},
	_bib2doi_finished = {true},
	_bib2doi_selected = {dblp:/rec/conf/icalp/BlancB24.bib},
	author = {Blanc, Manon and Bournez, Olivier},
	bibsource = {dblp computer science bibliography, https://dblp.org},
	biburl = {https://dblp.org/rec/conf/icalp/BlancB24.bib},
	booktitle = {51st International Colloquium on Automata, Languages, and Programming, {ICALP} 2024, Tallinn, Estonia, July 8-12, 2024},
	date-added = {2026-01-31 15:40:39 +0100},
	date-modified = {2026-01-31 15:40:39 +0100},
	doi = {10.4230/LIPIcs.ICALP.2024.129},
	editor = {Bringmann, Karl and Grohe, Martin and Puppis, Gabriele and Svensson, Ola},
	month = {July},
	pages = {129:1--129:22},
	publisher = {Schloss Dagstuhl - Leibniz-Zentrum f{\"{u}}r Informatik},
	remplacementadoi = {https://compose.ioc.ee/icalp2024/\#icalp},
	series = {LIPIcs},
	timestamp = {Sat, 15 Nov 2025 00:00:00 +0100},
	title = {The Complexity of Computing in Continuous Time: Space Complexity Is Precision},
	topics = {Analog,Dynamical,GPAC},
	url = {https://doi.org/10.4230/LIPIcs.ICALP.2024.129},
	volume = {297},
	year = {2024},
	bdsk-url-1 = {https://doi.org/10.4230/LIPIcs.ICALP.2024.129}}

@article{Computability2024Gozzi,
	_bib2doi_finished = {true},
	author = {Gozzi, Riccardo and Bournez, Olivier},
	date-added = {2026-01-31 15:40:39 +0100},
	date-modified = {2026-01-31 15:40:39 +0100},
	journal = {Computability},
	note = {Accepted for publication. To appear},
	title = {Set Descriptive Complexity of Solvable Functions},
	topics = {Analog,Dynamical},
	year = {2025},
	bdsk-url-1 = {https://doi.org/10.48550/arXiv.2405.19304}}

@article{Richardson1968,
	_bib2doi_confirmed = {true},
	_bib2doi_finished = {true},
	_bib2doi_selected = {dblp:/rec/journals/jsyml/Richardson68.bib},
	author = {Richardson, Daniel},
	bibsource = {dblp computer science bibliography, https://dblp.org},
	biburl = {https://dblp.org/rec/journals/jsyml/Richardson68.bib},
	date-added = {2026-01-07 16:24:22 +0100},
	date-modified = {2026-01-07 16:24:22 +0100},
	doi = {10.2307/2271358},
	journal = {Journal of Symbolic Logic},
	number = {4},
	pages = {514--520},
	timestamp = {Wed, 14 Nov 2018 00:00:00 +0100},
	title = {Some Undecidable Problems Involving Elementary Functions of a Real Variable},
	volume = {33},
	year = {1968},
	bdsk-url-1 = {https://doi.org/10.2307/2271358}}

@book{Hartman,
	_bib2doi_finished = {true},
	author = {Hartman, Philip},
	date-added = {2026-01-07 16:24:22 +0100},
	date-modified = {2026-01-07 16:24:22 +0100},
	doi = {10.1137/1.9780898719222},
	file = {LuNancy.bib},
	publisher = {John Wiley and Sons},
	title = {Ordinary Differential Equations},
	year = {1964},
	bdsk-url-1 = {https://doi.org/10.1137/1.9780898719222}}

@article{Mi70,
	_bib2doi_confirmed = {true},
	_bib2doi_finished = {true},
	_bib2doi_selected = {dblp:/rec/journals/jcss/Miller70.bib},
	anciennecledoublon = {Miller70},
	author = {Miller, Webb},
	bibsource = {dblp computer science bibliography, https://dblp.org},
	biburl = {https://dblp.org/rec/journals/jcss/Miller70.bib},
	date-added = {2026-01-07 16:24:22 +0100},
	date-modified = {2026-01-07 16:24:22 +0100},
	doi = {10.1016/S0022-0000(70)80043-5},
	journal = {J. Comput. System Sci.},
	number = {5},
	pages = {465--472},
	timestamp = {Tue, 16 Feb 2021 00:00:00 +0100},
	title = {Recursive function theory and numerical analysis},
	url = {https://doi.org/10.1016/S0022-0000(70)80043-5},
	volume = {4},
	year = {1970},
	bdsk-url-1 = {https://doi.org/10.1016/S0022-0000(70)80043-5}}

@article{Abe71,
	_bib2doi_finished = {true},
	author = {Aberth, Oliver},
	date-added = {2026-01-07 16:24:22 +0100},
	date-modified = {2026-01-07 16:24:22 +0100},
	doi = {10.2307/2038240},
	journal = {Proceedings of the American Mathematical Society},
	pages = {151--156},
	title = {The failure in computable analysis of a classical existence theorem for differential equations},
	volume = {30},
	year = {1971},
	bdsk-url-1 = {https://doi.org/10.2307/2038240}}

@article{Ko83,
	_bib2doi_confirmed = {true},
	_bib2doi_finished = {true},
	_bib2doi_selected = {dblp:/rec/journals/iandc/Ko83.bib},
	author = {Ko, Ker{-}I},
	bibdate = {2006-04-25},
	bibsource = {dblp computer science bibliography, https://dblp.org},
	biburl = {https://dblp.org/rec/journals/iandc/Ko83.bib},
	date-added = {2026-01-07 16:38:12 +0100},
	date-modified = {2026-01-07 16:38:12 +0100},
	doi = {10.1016/S0019-9958(83)80062-X},
	journal = {Information and Control},
	month = {jul/aug/sep},
	number = {1-3},
	pages = {157--194},
	timestamp = {Fri, 12 Feb 2021 00:00:00 +0100},
	title = {On the Computational Complexity of Ordinary Differential Equations},
	volume = {58},
	year = {1983},
	bdsk-url-1 = {https://doi.org/10.1016/S0019-9958(83)80062-X}}

@article{Pap94b,
	_bib2doi_confirmed = {true},
	_bib2doi_finished = {true},
	_bib2doi_selected = {dblp:/rec/journals/jcss/Papadimitriou94.bib},
	author = {Papadimitriou, Christos H.},
	bibsource = {dblp computer science bibliography, https://dblp.org},
	biburl = {https://dblp.org/rec/journals/jcss/Papadimitriou94.bib},
	date-added = {2026-01-07 16:38:12 +0100},
	date-modified = {2026-01-07 16:38:12 +0100},
	doi = {10.1016/S0022-0000(05)80063-7},
	file = {TheorieDesJeuxCommunications.bib},
	journal = {Journal of Computer and System Sciences},
	month = {jun},
	number = {3},
	pages = {498--532},
	preliminary = {FOCS::Papadimitriou1990},
	timestamp = {Tue, 16 Feb 2021 00:00:00 +0100},
	title = {On the Complexity of the Parity Argument and Other Inefficient Proofs of Existence},
	url = {https://doi.org/10.1016/S0022-0000(05)80063-7},
	volume = {48},
	year = {1994},
	bdsk-url-1 = {https://doi.org/10.1016/S0022-0000(05)80063-7}}

@inproceedings{Var82,
	_bib2doi_confirmed = {true},
	_bib2doi_finished = {true},
	_bib2doi_selected = {dblp:/rec/conf/stoc/Vardi82.bib},
	author = {Vardi, Moshe Y.},
	bibsource = {dblp computer science bibliography, https://dblp.org},
	biburl = {https://dblp.org/rec/conf/stoc/Vardi82.bib},
	booktitle = {Proceedings of the 14th Annual {ACM} Symposium on Theory of Computing, May 5-7, 1982, San Francisco, California, {USA}},
	date-added = {2026-01-07 16:38:12 +0100},
	date-modified = {2026-01-07 16:38:12 +0100},
	doi = {10.1145/800070.802186},
	editor = {Lewis, Harry R. and Simons, Barbara B. and Burkhard, Walter A. and Landweber, Lawrence H.},
	pages = {137--146},
	publisher = {{ACM}},
	timestamp = {Wed, 14 Nov 2018 00:00:00 +0100},
	title = {The Complexity of Relational Query Languages (Extended Abstract)},
	year = {1982},
	bdsk-url-1 = {https://doi.org/10.1145/800070.802186}}

@inproceedings{PV,
	_bib2doi_confirmed = {true},
	_bib2doi_finished = {true},
	_bib2doi_selected = {dblp:/rec/conf/hybrid/PuriV94.bib},
	acknowledgement = {ack-nhfb},
	author = {Puri, Anuj and Varaiya, Pravin},
	bdsk-url-1 = {https://doi.org/10.1007/3-540-60472-3\\\_18},
	bibdate = {Sat May 11 13:45:32 MDT 1996},
	bibsource = {dblp computer science bibliography, https://dblp.org},
	biburl = {https://dblp.org/rec/conf/hybrid/PuriV94.bib},
	booktitle = {Hybrid Systems II, Proceedings of the Third International Workshop on Hybrid Systems, Ithaca, NY, USA, October 1994},
	coden = {LNCSD9},
	date-added = {2026-01-07 16:41:46 +0100},
	date-modified = {2026-01-07 16:41:46 +0100},
	doi = {10.1007/3-540-60472-3\\\\_18},
	editor = {Antsaklis, Panos J. and Kohn, Wolf and Nerode, Anil and Sastry, Shankar},
	file = {HSII.bib},
	issn = {0302-9743},
	pages = {359--369},
	publisher = {Springer},
	series = {Lecture Notes in Computer Science},
	timestamp = {Sat, 20 May 2017 01:00:00 +0200},
	title = {Verification of Hybrid Systems Using Abstractions},
	url = {https://doi.org/10.1007/3-540-60472-3\\\\_18},
	volume = {999},
	year = {1994},
	bdsk-url-1 = {https://doi.org/10.1007/3-540-60472-3%5C%5C%5C%5C_18}}

@article{Ruo96,
	_bib2doi_confirmed = {true},
	_bib2doi_finished = {true},
	_bib2doi_selected = {dblp:/rec/journals/ijfcs/Ruohonen96.bib},
	author = {Ruohonen, Keijo},
	bibsource = {dblp computer science bibliography, https://dblp.org},
	biburl = {https://dblp.org/rec/journals/ijfcs/Ruohonen96.bib},
	date-added = {2026-01-07 16:41:46 +0100},
	date-modified = {2026-01-07 16:41:46 +0100},
	doi = {10.1142/S0129054196000129},
	journal = {International Journal of Foundations of Computer Science},
	number = {2},
	pages = {151--160},
	timestamp = {Sat, 27 May 2017 01:00:00 +0200},
	title = {An Effective Cauchy-Peano Existence Theorem for Unique Solutions},
	url = {https://doi.org/10.1142/S0129054196000129},
	volume = {7},
	year = {1996},
	bdsk-url-1 = {https://doi.org/10.1142/S0129054196000129}}

@inproceedings{dsg05,
	_bib2doi_confirmed = {true},
	_bib2doi_finished = {true},
	_bib2doi_selected = {dblp:/rec/conf/cie/GracaCB05.bib},
	author = {Gra{\c{c}}a, Daniel Silva and Campagnolo, Manuel Lameiras and Buescu, Jorge},
	bibsource = {dblp computer science bibliography, https://dblp.org},
	biburl = {https://dblp.org/rec/conf/cie/GracaCB05.bib},
	booktitle = {New Computational Paradigms, First Conference on Computability in Europe, CiE 2005, Amsterdam, The Netherlands, June 8-12, 2005, Proceedings},
	date-added = {2026-01-07 16:44:24 +0100},
	date-modified = {2026-01-07 16:44:24 +0100},
	doi = {10.1007/11494645\\\_21},
	editor = {Cooper, S. Barry and L{\"{o}}we, Benedikt and Torenvliet, Leen},
	file = {analog.bib},
	pages = {169--179},
	pdf = {http://wslc.math.ist.utl.pt/ftp/pub/GracaDS/05-GCB-stable.pdf},
	proj = {contcomp},
	ps = {http://wslc.math.ist.utl.pt/ftp/pub/GracaDS/05-GCB-stable.ps},
	publisher = {Springer-Verlag},
	series = {Lecture Notes in Computer Science},
	st = {a},
	timestamp = {Wed, 29 Jul 2020 01:00:00 +0200},
	title = {Robust Simulations of Turing Machines with Analytic Maps and Flows},
	url = {https://doi.org/10.1007/11494645\\\\_21},
	volume = {3526},
	year = {2005},
	bdsk-url-1 = {https://doi.org/10.1007/11494645%5C%5C%5C%5C_21},
	bdsk-url-2 = {https://doi.org/10.1007/11494645%5C%5C%5C_21}}

@article{dsg06a,
	_bib2doi_finished = {true},
	author = {Gra{\c c}a, Daniel S. and Zhong, N. and Buescu, J.},
	date-added = {2026-01-07 16:44:24 +0100},
	date-modified = {2026-01-07 16:44:24 +0100},
	doi = {10.1090/s0002-9947-09-04929-0},
	journal = {Transactions of the American Mathematical Society},
	number = {6},
	pages = {2913--2927},
	pdf = {http://wslc.math.ist.utl.pt/ftp/pub/GracaDS/06-GZB-noncompode.pdf},
	proj = {contcomp,t3},
	st = {a},
	title = {Computability, noncomputability and undecidability of maximal intervals of {IVP}s},
	volume = {361},
	year = {2009},
	bdsk-url-1 = {https://doi.org/10.1090/s0002-9947-09-04929-0}}

@book{krajicek1995bounded,
    author = "Kraj{\'{\i}}cek, Jan",
    _bib2doi_confirmed = "true",
    _bib2doi_finished = "true",
    _bib2doi_selected = "dblp:/rec/books/daglib/0082595.bib",
    bibsource = "dblp computer science bibliography, https://dblp.org",
    biburl = "https://dblp.org/rec/books/daglib/0082595.bib",
    date-added = "2026-01-07 16:48:16 +0100",
    date-modified = "2026-01-07 16:48:16 +0100",
    doi = "10.1017/cbo9780511529948",
    isbn = "978-0-521-45205-2",
    publisher = "Cambridge University Press",
    series = "Encyclopedia of mathematics and its applications",
    timestamp = "Fri, 19 Jul 2019 01:00:00 +0200",
    title = "Bounded arithmetic, propositional logic, and complexity theory",
    volume = "60",
    year = "1995",
    bdsk-url-1 = "https://doi.org/10.1017/cbo9780511529948"
}

@book{Ko91,
    author = "Ko, Ker{-}I",
    _bib2doi_confirmed = "true",
    _bib2doi_finished = "true",
    _bib2doi_selected = "dblp:/rec/books/daglib/0067010.bib",
    address = "Boston",
    bibsource = "dblp computer science bibliography, https://dblp.org",
    biburl = "https://dblp.org/rec/books/daglib/0067010.bib",
    date-added = "2026-01-07 16:48:16 +0100",
    date-modified = "2026-01-07 16:48:16 +0100",
    doi = "10.1007/978-1-4684-6802-1",
    file = "analog.bib",
    isbn = "978-0-8176-3586-2",
    publisher = {Birkh{\"{a}}user},
    series = "Progress in theoretical computer science",
    timestamp = "Wed, 17 Jul 2019 01:00:00 +0200",
    title = "Complexity theory of real functions",
    volume = "3",
    year = "1991",
    bdsk-url-1 = "https://doi.org/10.1007/978-1-4684-6802-1"
}

@book{Wei00,
    author = "Weihrauch, Klaus",
    _bib2doi_confirmed = "true",
    _bib2doi_finished = "true",
    _bib2doi_selected = "dblp:/rec/series/txtcs/Weihrauch00.bib",
    anciennecledoublon = "LivreComputableAnalysis",
    bibsource = "dblp computer science bibliography, https://dblp.org",
    biburl = "https://dblp.org/rec/series/txtcs/Weihrauch00.bib",
    date-added = "2026-01-07 16:48:16 +0100",
    date-modified = "2026-01-07 16:48:16 +0100",
    doi = "10.1007/978-3-642-56999-9",
    isbn = "978-3-540-66817-6",
    publisher = "Springer",
    series = "Texts in Theoretical Computer Science. An {EATCS} Series",
    timestamp = "Tue, 16 May 2017 01:00:00 +0200",
    title = "Computable Analysis - An Introduction",
    year = "2000",
    bdsk-url-1 = "https://doi.org/10.1007/978-3-642-56999-9"
}

@inproceedings{Daskalakis06,
    author = "Daskalakis, Constantinos and Papadimitriou, Christos H.",
    editor = "Feigenbaum, Joan and Chuang, John C.{-}I. and Pennock, David M.",
    _bib2doi_confirmed = "true",
    _bib2doi_finished = "true",
    _bib2doi_old_doi = "http://doi.acm.org/10.1145/1134707.1134718",
    _bib2doi_selected = "dblp:/rec/conf/sigecom/DaskalakisP06.bib",
    address = "New York, NY, USA",
    bibsource = "dblp computer science bibliography, https://dblp.org",
    biburl = "https://dblp.org/rec/conf/sigecom/DaskalakisP06.bib",
    booktitle = "Proceedings 7th {ACM} Conference on Electronic Commerce (EC-2006), Ann Arbor, Michigan, USA, June 11-15, 2006",
    date-added = "2026-01-07 16:48:16 +0100",
    date-modified = "2026-01-07 16:48:16 +0100",
    doi = "10.1145/1134707.1134718",
    isbn = "1-59593-236-4",
    location = "Ann Arbor, Michigan, USA",
    pages = "91--99",
    publisher = "{ACM}",
    timestamp = "Sun, 19 Jan 2025 00:00:00 +0100",
    title = "Computing pure nash equilibria in graphical games via markov random fields",
    url = "https://doi.org/10.1145/1134707.1134718",
    year = "2006",
    bdsk-url-1 = "http://doi.acm.org/10.1145/1134707.1134718"
}

@inproceedings{collins2008effectivesimpl,
    author = "Collins, Pieter and Gra{\c{c}}a, Daniel Silva",
    editor = "Brattka, Vasco and Dillhage, Ruth and Grubba, Tanja and Klutsch, Angela",
    _bib2doi_confirmed = "true",
    _bib2doi_finished = "true",
    _bib2doi_selected = "dblp:/rec/journals/entcs/CollinsG08.bib",
    bibsource = "dblp computer science bibliography, https://dblp.org",
    biburl = "https://dblp.org/rec/journals/entcs/CollinsG08.bib",
    booktitle = "Proceedings of the Fifth International Conference on Computability and Complexity in Analysis, {CCA} 2008, Hagen, Germany, August 21-24, 2008",
    date-added = "2026-01-07 16:47:18 +0100",
    date-modified = "2026-01-07 16:47:18 +0100",
    doi = "10.1016/j.entcs.2008.12.010",
    journal = "Electronic Notes in Theoretical Computer Science",
    pages = "103--114",
    publisher = "Elsevier",
    series = "Electronic Notes in Theoretical Computer Science",
    timestamp = "Fri, 17 Feb 2023 00:00:00 +0100",
    title = "Effective Computability of Solutions of Ordinary Differential Equations The Thousand Monkeys Approach",
    url = "https://doi.org/10.1016/j.entcs.2008.12.010",
    volume = "221",
    year = "2008",
    bdsk-url-1 = "https://doi.org/10.1016/j.entcs.2008.12.010"
}

@inproceedings{kawamura2009lipschitz,
    author = "Kawamura, Akitoshi",
    _bib2doi_confirmed = "true",
    _bib2doi_finished = "true",
    _bib2doi_old_doi = "10.1109/ccc.2009.34",
    _bib2doi_selected = "dblp:/rec/conf/coco/Kawamura09.bib",
    bibsource = "dblp computer science bibliography, https://dblp.org",
    biburl = "https://dblp.org/rec/conf/coco/Kawamura09.bib",
    booktitle = "Proceedings of the 24th Annual {IEEE} Conference on Computational Complexity, {CCC} 2009, Paris, France, 15-18 July 2009",
    date-added = "2026-01-07 16:47:18 +0100",
    date-modified = "2026-01-07 16:47:18 +0100",
    doi = "10.1109/CCC.2009.34",
    organization = "IEEE",
    pages = "149--160",
    publisher = "{IEEE} Computer Society",
    timestamp = "Fri, 24 Mar 2023 00:00:00 +0100",
    title = "Lipschitz Continuous Ordinary Differential Equations are Polynomial-Space Complete",
    url = "https://doi.org/10.1109/CCC.2009.34",
    year = "2009",
    bdsk-url-1 = "https://doi.org/10.1109/ccc.2009.34"
}

@article{Ruo97b,
    author = "Ruohonen, Keijo",
   _bib2doi_old_doi = "10.1002/(sici)1099-0526(199707/08)2:6<41::aid-cplx9>3.0.co;2-k",
    _bib2doi_selected = "dblp:/rec/journals/complexity/Ruohonen97.bib",
    bibsource = "dblp computer science bibliography, https://dblp.org",
    biburl = "https://dblp.org/rec/journals/complexity/Ruohonen97.bib",
    date-added = "2026-01-07 16:47:18 +0100",
    date-modified = "2026-01-07 16:47:18 +0100",
     journal = "Complexity",
    number = "6",
    pages = "41--53",
    timestamp = "Thu, 24 Sep 2020 01:00:00 +0200",
    title = "Decidability and complexity of event detection problems for ODEs",
    volume = "2",
    year = "1997",
}

@article{PoulyGraca16,
	_bib2doi_confirmed = {true},
	_bib2doi_finished = {true},
	_bib2doi_selected = {dblp:/rec/journals/tcs/PoulyG16.bib},
	author = {Pouly, Amaury and Gra{\c{c}}a, Daniel Silva},
	bibsource = {dblp computer science bibliography, https://dblp.org},
	biburl = {https://dblp.org/rec/journals/tcs/PoulyG16.bib},
	date-added = {2026-01-07 17:24:10 +0100},
	date-modified = {2026-01-07 17:24:10 +0100},
	doi = {10.1016/j.tcs.2016.02.002},
	journal = {Theoretical Computer Science},
	pages = {67--82},
	timestamp = {Wed, 17 Feb 2021 00:00:00 +0100},
	title = {Computational complexity of solving polynomial differential equations over unbounded domains},
	url = {https://doi.org/10.1016/j.tcs.2016.02.002},
	volume = {626},
	year = {2016},
	bdsk-url-1 = {https://doi.org/10.1016/j.tcs.2016.02.002}}

@article{kawamura2015computational,
	_bib2doi_confirmed = {true},
	_bib2doi_finished = {true},
	_bib2doi_selected = {dblp:/rec/journals/jc/KawamuraMR015.bib},
	author = {Kawamura, Akitoshi and M{\"{u}}ller, Norbert Th. and R{\"{o}}snick, Carsten and Ziegler, Martin},
	bibsource = {dblp computer science bibliography, https://dblp.org},
	biburl = {https://dblp.org/rec/journals/jc/KawamuraMR015.bib},
	date-added = {2026-01-07 17:24:10 +0100},
	date-modified = {2026-01-07 17:24:10 +0100},
	doi = {10.1016/j.jco.2015.05.001},
	journal = {Journal of Complexity},
	number = {5},
	pages = {689--714},
	publisher = {Elsevier},
	timestamp = {Thu, 20 Feb 2020 00:00:00 +0100},
	title = {Computational benefit of smoothness: Parameterized bit-complexity of numerical operators on analytic functions and Gevrey's hierarchy},
	url = {https://doi.org/10.1016/j.jco.2015.05.001},
	volume = {31},
	year = {2015},
	bdsk-url-1 = {https://doi.org/10.1016/j.jco.2015.05.001}}

@article{Simpson84Peano,
	_bib2doi_confirmed = {true},
	_bib2doi_finished = {true},
	_bib2doi_selected = {dblp:/rec/journals/jsyml/Simpson84.bib},
	author = {Simpson, Stephen G.},
	bibsource = {dblp computer science bibliography, https://dblp.org},
	biburl = {https://dblp.org/rec/journals/jsyml/Simpson84.bib},
	date-added = {2026-01-07 17:24:10 +0100},
	date-modified = {2026-01-07 17:24:10 +0100},
	doi = {10.2307/2274131},
	journal = {The Journal of Symbolic Logic},
	number = {3},
	pages = {783--802},
	publisher = {Cambridge University Press},
	timestamp = {Sun, 06 Oct 2024 01:00:00 +0200},
	title = {Which Set Existence Axioms are Needed to Prove the Cauchy/Peano Theorem for Ordinary Differential Equations?},
	url = {https://doi.org/10.2307/2274131},
	volume = {49},
	year = {1984},
	bdsk-url-1 = {https://doi.org/10.2307/2274131}}

@inproceedings{kawamura2018parameterized,
	_bib2doi_confirmed = {true},
	_bib2doi_finished = {true},
	_bib2doi_selected = {dblp:/rec/conf/wollic/KawamuraST18.bib},
	author = {Kawamura, Akitoshi and Steinberg, Florian and Thies, Holger},
	bibsource = {dblp computer science bibliography, https://dblp.org},
	biburl = {https://dblp.org/rec/conf/wollic/KawamuraST18.bib},
	booktitle = {Logic, Language, Information, and Computation - 25th International Workshop, WoLLIC 2018, Bogota, Colombia, July 24-27, 2018, Proceedings},
	date-added = {2026-01-07 17:24:10 +0100},
	date-modified = {2026-01-07 17:24:10 +0100},
	doi = {10.1007/978-3-662-57669-4\_13},
	editor = {Moss, Lawrence S. and de Queiroz, Ruy J. G. B. and Mart{\'{\i}}nez, Maricarmen},
	organization = {Springer},
	pages = {223--236},
	publisher = {Springer},
	series = {Lecture Notes in Computer Science},
	timestamp = {Mon, 02 Nov 2020 00:00:00 +0100},
	title = {Parameterized Complexity for Uniform Operators on Multidimensional Analytic Functions and {ODE} Solving},
	url = {https://doi.org/10.1007/978-3-662-57669-4\\_13},
	volume = {10944},
	year = {2018},
	bdsk-url-1 = {https://doi.org/10.1007/978-3-662-57669-4%5C%5C_13},
	bdsk-url-2 = {https://doi.org/10.1007/978-3-662-57669-4%5C_13}}

@article{kawamura2014computational,
	_bib2doi_confirmed = {true},
	_bib2doi_finished = {true},
	_bib2doi_old_doi = {10.2168/lmcs-10(1:6)2014},
	_bib2doi_selected = {dblp:/rec/journals/corr/KawamuraORZ13.bib},
	author = {Kawamura, Akitoshi and Ota, Hiroyuki and R{\"{o}}snick, Carsten and Ziegler, Martin},
	bibsource = {dblp computer science bibliography, https://dblp.org},
	biburl = {https://dblp.org/rec/journals/corr/KawamuraORZ13.bib},
	date-added = {2026-01-07 17:24:10 +0100},
	date-modified = {2026-01-07 17:24:10 +0100},
	doi = {10.2168/LMCS-10(1:6)2014},
	journal = {Log. Methods Comput. Sci.},
	number = {1},
	publisher = {Episciences. org},
	timestamp = {Thu, 25 Jun 2020 01:00:00 +0200},
	title = {Computational Complexity of Smooth Differential Equations},
	url = {https://doi.org/10.2168/LMCS-10(1:6)2014},
	volume = {10},
	year = {2014},
	bdsk-url-1 = {https://doi.org/10.2168/lmcs-10(1:6)2014}}

@book{Imm99,
	_bib2doi_confirmed = {true},
	_bib2doi_finished = {true},
	_bib2doi_selected = {dblp:/rec/books/daglib/0095988.bib},
	author = {Immerman, Neil},
	bibsource = {dblp computer science bibliography, https://dblp.org},
	biburl = {https://dblp.org/rec/books/daglib/0095988.bib},
	date-added = {2026-01-07 17:24:10 +0100},
	date-modified = {2026-01-07 17:24:10 +0100},
	doi = {10.1007/978-1-4612-0539-5},
	file = {ComplexiteImplicite.bib},
	isbn = {978-1-4612-6809-3},
	publisher = {Springer},
	series = {Graduate texts in computer science},
	timestamp = {Tue, 16 May 2017 01:00:00 +0200},
	title = {Descriptive complexity},
	url = {https://doi.org/10.1007/978-1-4612-0539-5},
	year = {1999},
	bdsk-url-1 = {https://doi.org/10.1007/978-1-4612-0539-5}}

@book{cook2010logical,
	_bib2doi_finished = {true},
	author = {Cook, Stephen and Nguyen, Phuong},
	date-added = {2026-01-07 17:24:10 +0100},
	date-modified = {2026-01-07 17:24:10 +0100},
	doi = {10.1017/cbo9780511676277},
	publisher = {Cambridge University Press Cambridge},
	title = {Logical foundations of proof complexity},
	volume = {11},
	year = {2010},
	bdsk-url-1 = {https://doi.org/10.1017/cbo9780511676277}}

@book{simpson2009subsystems,
	_bib2doi_finished = {true},
	author = {Simpson, Stephen George},
	date-added = {2026-01-07 17:24:10 +0100},
	date-modified = {2026-04-20 16:48:02 +0200},
	doi = {10.1017/cbo9780511581007},
	publisher = {Cambridge University Press},
	title = {Subsystems of second order arithmetic},
	volume = {1},
	year = {2009},
	bdsk-url-1 = {https://doi.org/10.1017/cbo9780511581007}}

@incollection{Cob65,
	_bib2doi_finished = {true},
	address = {Amsterdam},
	author = {Cobham, Alan},
	booktitle = {Proceedings of the International Conference on Logic, Methodology, and Philosophy of Science},
	date-added = {2024-04-24 21:08:05 +0200},
	date-modified = {2024-04-24 21:08:05 +0200},
	editor = {Bar-Hillel, Y.},
	file = {BSSfunc.bib},
	pages = {24--30},
	publisher = {North-Holland},
	title = {The intrinsic computational difficulty of functions},
	year = {1962}}

@book{CL72,
	_bib2doi_finished = {true},
	author = {Coddington, Earl E. and Levinson, Norman},
	date-added = {2024-04-24 21:08:05 +0200},
	date-modified = {2024-04-24 21:08:05 +0200},
	file = {LuEns.bib},
	month = {apr},
	publisher = {McGraw-Hill},
	title = {Theory of Ordinary Differentiel Equations},
	year = {1972}}

@book{bostan2017algorithmes,
	_bib2doi_finished = {true},
	author = {Bostan, Alin and Chyzak, Fr{\'e}d{\'e}ric and Giusti, Marc and Lebreton, Romain and Lecerf, Gr{\'e}goire and Salvy, Bruno and Schost, {\'E}ric},
	date-added = {2026-01-08 17:40:14 +0100},
	date-modified = {2026-01-08 17:40:14 +0100},
	publisher = {published by the Authors},
	title = {Algorithmes efficaces en calcul formel},
	year = {2017}}

@phdthesis{TheseRiccardo,
	_bib2doi_finished = {true},
	author = {Gozzi, Riccardo},
	date-added = {2026-01-08 17:40:14 +0100},
	date-modified = {2026-01-08 17:40:14 +0100},
	school = {Instituto Superior T{\'e}cnico, Lisbon, Portugal and University of Algarve, Faro, Portugal},
	title = {Analog Characterization of Complexity Classes},
	year = {2022}}

@phdthesis{TheseAmaury,
	_bib2doi_finished = {true},
	author = {Pouly, Amaury},
	date-added = {2026-01-08 17:40:14 +0100},
	date-modified = {2026-01-08 17:40:14 +0100},
	month = {Defended on July 6, 2015.},
	note = {https://pastel.archives-ouvertes.fr/tel-01223284, Prix de Th{\`e}se de l'Ecole Polyechnique 2016, Ackermann Award 2017},
	school = {Ecole Polytechnique and Unidersidade Do Algarve},
	title = {Continuous models of computation: from computability to complexity},
	year = {2015}}

@phdthesis{TheseManon,
	_bib2doi_finished = {true},
	author = {Blanc, Manon},
	date-added = {2025-06-06 10:11:26 +0200},
	date-modified = {2025-11-27 22:26:54 +0100},
	note = {Phd Young Talent Award L'Or{\'e}al-Unesco 2025, and Prix STIC Doctorants du plateau de Saclay 2025},
	school = {Ecole Polytechnique},
	title = {Discrete-Time and Continuous-Time Systems over the Reals: Relating Complexity with Robustness, Length and Precision},
	year = {2025}}

@incollection{GracaZhongHandbook,
	_bib2doi_finished = {true},
	author = {Gra{\c c}a, Daniel S. and Zhong, Ning},
	title = {Computability of Differential Equations},
	date-added = {2024-07-16 11:08:40 +0200},
	date-modified = {2024-07-16 11:08:40 +0200},
	editor = {Brattka, Vasco and Hertling, Peter},
	journal = {To appear.},
	publisher = {Springer.},
	booktitle = {Handbook of Computability and Complexity in Analysis},
	topics = {Analog,Dynamical,GPAC},
	year = {2018}}

@article{brent1978fast,
	author = {Brent, Richard P and Kung, Hsiang T},
	journal = {Journal of the ACM (JACM)},
	number = {4},
	pages = {581--595},
	publisher = {ACM New York, NY, USA},
	title = {Fast algorithms for manipulating formal power series},
	volume = {25},
	year = {1978}}

@book{Soare2016,
	address = {Berlin, Heidelberg},
	author = {Soare, Robert I.},
	doi = {10.1007/978-3-642-31933-4},
	isbn = {978-3-642-31932-7},
	publisher = {Springer},
	series = {Theory and Applications of Computability},
	title = {Turing Computability: Theory and Applications},
	year = {2016}}

@incollection{brattka2021weihrauch,
	author = {Brattka, Vasco and Gherardi, Guido and Pauly, Arno},
	booktitle = {Handbook of computability and complexity in analysis},
	pages = {367--417},
	publisher = {Springer},
	title = {Weihrauch complexity in computable analysis},
	year = {2021}}

@inproceedings{Antonelli0K24,
	author = {Melissa Antonelli and Arnaud Durand and Juha Kontinen},
	bibsource = {dblp computer science bibliography, https://dblp.org},
	booktitle = {49th International Symposium on Mathematical Foundations of Computer Science, {MFCS} 2024, Bratislava, Slovakia, August 26-30, 2024},
	doi = {10.4230/LIPICS.MFCS.2024.10},
	editor = {Rastislav Kr{\'{a}}lovic and Anton{\'{\i}}n Kucera},
	pages = {10:1--10:18},
	publisher = {Schloss Dagstuhl - Leibniz-Zentrum f{\"{u}}r Informatik},
	series = {LIPIcs},
	title = {A New Characterization of {FAC$^0$} via Discrete Ordinary Differential Equations},
	url = {https://doi.org/10.4230/LIPIcs.MFCS.2024.10},
	volume = {306},
	year = {2024}}

@inproceedings{Antonelli0K25,
	author = {Melissa Antonelli and Arnaud Durand and Juha Kontinen},
	bibsource = {dblp computer science bibliography, https://dblp.org},
	booktitle = {50th International Symposium on Mathematical Foundations of Computer Science, {MFCS} 2025, Warsaw, Poland, August 25-29, 2025},
	doi = {10.4230/LIPICS.MFCS.2025.10},
	editor = {Pawel Gawrychowski and Filip Mazowiecki and Michal Skrzypczak},
	pages = {10:1--10:18},
	publisher = {Schloss Dagstuhl - Leibniz-Zentrum f{\"{u}}r Informatik},
	series = {LIPIcs},
	title = {Characterizing Small Circuit Classes from {FAC$^0$} to {FAC$^1$} via Discrete Ordinary Differential Equations},
	url = {https://doi.org/10.4230/LIPIcs.MFCS.2025.10},
	volume = {345},
	year = {2025}}

@article{AntonelliDK26,
	author = {Melissa Antonelli and Arnaud Durand and Juha Kontinen},
	bibsource = {dblp computer science bibliography, https://dblp.org},
	doi = {10.1016/J.TCS.2025.115655},
	journal = {Theor. Comput. Sci.},
	pages = {115655},
	title = {Towards new characterizations of small circuit classes via discrete ordinary differential equations},
	url = {https://doi.org/10.1016/j.tcs.2025.115655},
	volume = {1062},
	year = {2026}}

@article{jockusch1972degrees,
	author = {Jockusch, Carl and Soare, Robert},
	journal = {Pacific Journal of Mathematics},
	number = {3},
	pages = {605--616},
	publisher = {Mathematical Sciences Publishers},
	title = {Degrees of members of $\Pi$10 classes},
	volume = {40},
	year = {1972}}

@article{Liu12,
	author = {Jiayi Liu},
	bibsource = {dblp computer science bibliography, https://dblp.org},
	doi = {10.2178/JSL/1333566640},
	journal = {J. Symb. Log.},
	number = {2},
	pages = {609--620},
	title = {{$RT^2_2$ does not imply $WKL_0$}},
	url = {https://doi.org/10.2178/jsl/1333566640},
	volume = {77},
	year = {2012}}

@article{SeetapunS95,
	author = {David Seetapun and Theodore A. Slaman},
	bibsource = {dblp computer science bibliography, https://dblp.org},
	doi = {10.1305/NDJFL/1040136917},
	journal = {Notre Dame J. Formal Log.},
	number = {4},
	pages = {570--582},
	title = {On the Strength of Ramsey's Theorem},
	url = {https://doi.org/10.1305/ndjfl/1040136917},
	volume = {36},
	year = {1995}}

@article{friedman1976systems,
	author = {Friedman, Harvey M},
	journal = {J. Symb. Logic},
	pages = {557--559},
	title = {Systems on second order arithmetic with restricted induction i, ii},
	volume = {41},
	year = {1976}}

@inproceedings{friedman1975some,
	author = {Friedman, Harvey M},
	booktitle = {Proc. of ICM},
	organization = {Canadian Math. Congress},
	pages = {235--242},
	title = {Some systems of second order arithmetic and their use},
	volume = {1},
	year = {1975}}

@article{ReverseMathOfComplexityLowerBournds24,
	author = {Lijie Chen and Jiatu Li and Igor Carboni Oliveira},
	bibsource = {dblp computer science bibliography, https://dblp.org},
	eprint = {TR24-060},
	eprinttype = {ECCC},
	journal = {Electron. Colloquium Comput. Complex.},
	title = {Reverse Mathematics of Complexity Lower Bounds},
	url = {https://eccc.weizmann.ac.il/report/2024/060},
	volume = {{TR24-060}},
	year = {2024}}

@article{PR79,
	_bib2doi_finished = {true},
	author = {Pour-El, M. B. and Richards, J. I.},
	date-added = {2026-01-08 23:10:07 +0100},
	date-modified = {2026-01-08 23:10:07 +0100},
	doi = {10.1016/0003-4843(79)90021-4},
	journal = {Ann. Math. Logic},
	pages = {61--90},
	title = {A computable ordinary differential equation which possesses no computable solution},
	volume = {17},
	year = {1979},
	bdsk-url-1 = {https://doi.org/10.1016/0003-4843(79)90021-4}}

@article{collins2009effective,
	_bib2doi_confirmed = {true},
	_bib2doi_finished = {true},
	_bib2doi_selected = {dblp:/rec/journals/jucs/CollinsG09.bib},
	author = {Pieter Collins and Daniel Silva Gra{\c{c}}a},
	bibsource = {dblp computer science bibliography, https://dblp.org},
	biburl = {https://dblp.org/rec/journals/jucs/CollinsG09.bib},
	date-added = {2024-05-02 17:35:31 +0200},
	date-modified = {2024-05-02 17:35:31 +0200},
	doi = {10.3217/jucs-015-06-1162},
	journal = {Journal of Universal Computer Science},
	number = {6},
	pages = {1162--1185},
	publisher = {www. jucs. org},
	timestamp = {Thu, 07 Sep 2023 01:00:00 +0200},
	title = {Effective Computability of Solutions of Differential Inclusions The Ten Thousand Monkeys Approach},
	url = {https://doi.org/10.3217/jucs-015-06-1162},
	volume = {15},
	year = {2009},
	bdsk-url-1 = {https://doi.org/10.3217/jucs-015-06-1162}}

@book{PORI17,
	_bib2doi_finished = {true},
	author = {Pour-El, Marian B and Richards, J Ian},
	date-added = {2024-05-02 17:27:36 +0200},
	date-modified = {2024-05-02 17:27:36 +0200},
	doi = {10.1017/9781316717325},
	publisher = {Cambridge University Press},
	title = {Computability in analysis and physics},
	volume = {1},
	year = {2017},
	bdsk-url-1 = {https://doi.org/10.1017/9781316717325}}

@incollection{brattka2008tutorial,
	_bib2doi_finished = {true},
	author = {Brattka, Vasco and Hertling, Peter and Weihrauch, Klaus},
	booktitle = {New computational paradigms},
	date-added = {2024-05-01 21:37:32 +0200},
	date-modified = {2024-05-01 21:37:32 +0200},
	doi = {10.1007/978-0-387-68546-5_18},
	pages = {425--491},
	publisher = {Springer},
	title = {A tutorial on computable analysis},
	year = {2008},
	bdsk-url-1 = {https://doi.org/10.1007/978-0-387-68546-5_18}}

\end{document}